\def\restrict#1{\raise-.5ex\hbox{\ensuremath|}_{#1}}
\newcommand{\IN}{\mathbb{N}}
\newcommand{\IR}{\mathbb{R}}
\newcommand{\m}[1]{\mathbf{#1}}
\DeclareMathOperator{\diag}{diag}
\DeclareMathOperator{\rank}{rank}
\DeclareMathOperator{\spn}{span}
\DeclareMathOperator{\CR}{CR}
\newdefinition{definition}{Definition}
\newtheorem{thm}{Theorem}
\newtheorem{proposition}{Proposition}
\newtheorem{lemma}{Lemma}
\newtheorem{corollary}{Corrolary}
\newcommand{\TheTitle}{Nonlinear least-squares spline fitting with variable knots}
\begin{document}

\title{{\TheTitle}}

	\author[1,2]{P\'eter Kov\'acs\corref{cor1} } 
	\cortext[cor1]{Corresponding author. Tel.: +36 30 806 3000 / 8460; +43 732 2468 / 5686}
	\ead{kovika@inf.elte.hu, peter.kovacs@jku.at}
	\ead[url]{http://www.numanal.inf.elte.hu/\string~kovi/}
	\address[1]{Department of Numerical Analysis, E\"otv\"os L. University, 1117 Budapest, Hungary}
	\address[2]{Institue of Signal Processing, Johannes Kepler University Linz, 4040 Linz, Austria}
  \author[3]{Andrea M. Fekete}
	\ead{featact@inf.elte.hu}
	\address[3]{E\"otv\"os L. University, 1117 Budapest, Hungary}

\begin{abstract}
In this paper, we present a nonlinear least-squares fitting algorithm using B-splines with free knots. Since its performance strongly depends on the initial estimation of the free parameters (i.e. the knots), we also propose a fast and efficient knot-prediction algorithm that utilizes numerical properties of first-order B-splines. Using $\ell_p\;(p=1,2,\infty)$ norm solutions, we also provide three different strategies for properly selecting the free knots. Our initial predictions are then iteratively refined  by means of a gradient-based variable projection optimization. Our method is general in nature and can be used to estimate the optimal number of knots in cases in which no a-priori information is available.\\
To evaluate the performance of our method, we approximated a one-dimensional discrete time series and conducted an extensive comparative study using both synthetic and real-world data. We chose the problem of electrocardiogram (ECG) signal compression as a real-world case study. Our experiments on the well-known PhysioNet MIT-BIH Arrhythmia database show that the proposed method outperforms other knot-prediction techniques in terms of accuracy while requiring much lower computational complexity.

\end{abstract}

\begin{keyword}
free knot splines\sep nonlinear nonconvex optimization\sep variable projection\sep nonlinear least-squares problems\sep signal compression\sep electrocardiograms (ECG)
\MSC 65K10 \sep 65D10 \sep 65D07 \sep 90C59\sep 92C55
\end{keyword}

\maketitle

\section{Introduction}
\label{sec:intro}

While curve-fitting algorithms are used in many fields of applied sciences, this paper focuses on their signal processing aspects. Let us consider the normed space $(\mathcal{H},\left\|\cdot\right\|)$ of a specific class of real-valued signals over time. In this framework, the general nonlinear model of a particular $f\in\mathcal{H}$ can be given as follows:
\begin{equation}
\label{eq:nonlin_model}
f(t)\approx \eta(\m{c},\boldsymbol{\alpha};t)=\sum_{k=1}^n c_{k}\varphi_{k}(\boldsymbol{\alpha};t) \qquad (t\in\IR,\,\m{c}\in\IR^n,\,\boldsymbol{\alpha}\in\IR^m)\textrm{,}
\end{equation}
where the numbers of parameters $n,m\in\IN_+$ and the system of linearly independent functions $\left\{\varphi_k(\boldsymbol{\alpha};\cdot)\in\mathcal{H}\,|\,k=1,\ldots,n\right\}$ are predefined in accordance with the problem to be investigated. Usually, the Lebesgue spaces $L^p(\IR)$ for $p=1,2,\infty$ are chosen for $\mathcal{H}$ because they can be used in several contexts in signal processing, such as data-fitting, filter design, neural networks, classification, model reduction, and robotics. The best parameters can be defined as the solution to
\begin{equation}
\label{eq:lp_opt}
\underset{\m{c}\in\IR^n,\,\boldsymbol{\alpha}\in\IR^m}{\min} r_p(\mathbf{c},\boldsymbol{\alpha})= \underset{\m{c}\in\IR^n,\,\boldsymbol{\alpha}\in\IR^m}{\min} \Big\| f(\cdot) - \eta(\m{c},\boldsymbol{\alpha};\cdot) \Big\|_p \qquad (f\in L^p(\IR))\textrm{.}
\end{equation}
This problem can be simplified for $p=2$, since $L^2(\IR)$ is a Hilbert space with regard to the usual dot product $\left\langle\cdot,\cdot \right\rangle$ and the corresponding induced norm $\left\|f\right\|_2=\sqrt{\left\langle f, f \right\rangle}$. In this case, the parameters are separable, and for a given $\boldsymbol{\alpha}$, the optimal value of $\m{c}$ can be computed explicitly via the orthogonal projection $P_{\mathcal{S}(\boldsymbol{\alpha})}$ onto the subspace $\mathcal{S}(\boldsymbol{\alpha}):=\spn\left\{\varphi_{k}(\boldsymbol{\alpha};\cdot)\in L^2(\IR)\,|\, k=1,\ldots,n \right\}$. The original problem in Eq.~\eqref{eq:lp_opt} can therefore be reduced to 
\begin{equation}
	\min_{\boldsymbol{\alpha}\in\IR^m} r_2(\boldsymbol{\alpha})= \min_{\boldsymbol{\alpha}\in\IR^m} \left\|f-P_{\mathcal{S}(\boldsymbol{\alpha})}f\right\|^2_2= \min_{\boldsymbol{\alpha}\in\IR^m} 	\big\Vert P^{\bot}_{\mathcal{S}(\boldsymbol{\alpha})}f\big\Vert^2_2\,,
\label{eq:varpro_min}
\end{equation}
where $r_2$ is called the \textsl{variable projection functional} (VP), and $P^{\bot}_{\mathcal{S}(\boldsymbol{\alpha})}$ is the projector on the orthogonal complement of $\mathcal{S}(\boldsymbol{\alpha})$ with respect to $L^2(\IR)$. In practice, the discrete variant of the problem is considered, that is, $\mathcal{H}=\IR^N,\;(N\in\IN_+),$ and the $\ell_2$ norm is used. This special case is a separable nonlinear least-squares problem, which has been investigated by many authors.

In his 1970 article, Scolnik \cite{scolnik1970} described the solution to Eq.~\eqref{eq:varpro_min} for the case of exponential type base functions, which was later extended by Guttman et.\ al.~\cite{gps1971} to general functions with one variable. Lawton and Sylvestre \cite{lawton-sylvestre} gave a numerical solution to the same problem with several variables, approximating the derivatives of $r_2$ by finite differences. Later, Golub and Pereyra \cite{golub-pereyra} constructed the exact formula for the Fr\'echet derivative of $r_2$ with respect to $\boldsymbol{\alpha}$. An extensive review of the related results along with various applications of nonlinear least-squares can be found in \cite{golub_pereyra2003}. In order to generalize the original VP problem by incorporating Tikhonov regularization, Chung and Nagy \cite{tikhonov_varpro} developed a hybrid iterative approach in which the regularization parameter can be chosen automatically. Cornelio et al.~\cite{nonegative_varpro} adapted this approach for blind deconvolution problems, including nonnegativity constraints on the nonlinear parameters. A recent publication by O'Leary and Rust \cite{varpro_matlab} has summarized the evolution of VP implementations in various programming languages, such as FORTRAN, R and \textsc{MatLab}. They also developed a \textsc{MatLab} implementation for optimizing the VP functional that allows constraints and weights to be used.

Piecewise polynomial approximations (e.g.\ splines) play a substantial role in data-fitting. These are flexible curves that can easily be parametrized via knots. Substituting the elementary waves $\varphi_k(\boldsymbol{\alpha};\cdot)$ with B-splines requires the optimal vector of knots $\boldsymbol{\alpha}\in\IR^m$ in Eq.~\eqref{eq:varpro_min} to be determined. The resulting optimization problem is difficult to solve since the VP functional is nonlinear and nonconvex. As shown by Jupp, the main problem is caused by identical knots, which induce numerous stationary points (see, e.g.,\ the Lethargy Theorem in \cite{jupp}). To avoid coalescing knots, Jupp \cite{jupp} proposed a logarithmic transformation for pushing the knot set boundaries to infinity. Penalizing the distance between knots is a similar approach used in \cite{dierckx, guertin, lindstrom}. Other authors, such as Molinari et al.~\cite{bdoptknot}, defined constraints on the knot sequence, while Beliakov \cite{beliakov} utilized global optimization techniques to solve the problem. Borges and Pastva~\cite{varpro_bezier} subsequently reformulated the VP problem for nonlinear B\'ezier curve fitting, making it applicable to computer-aided design. Since, from a signal processing point of view, the domain of possible knot sequences can be considered a discrete set (i.e.\ the sample indices), combinatorial optimization techniques are also applicable \cite{yoshimoto}. The most commonly used heuristic is greedy backward deletion of knots. A review of these algorithms and an application for compressing electrocardiograms (ECG) can be found in \cite{bdoptknot} and \cite{ekg}, respectively. 

The aforementioned procedures and implementations continue to have some shortcomings. Some cannot reliably find a global optimum or suffer from the consequences of the Lethargy Effect, while others that overcome these problems require time-intensive computations. We developed an algorithm that addresses these problems, requires significantly less execution time, and can be used to initialize other (i.e.\ gradient-based) methods. We constructed our algorithm using first-order B-splines. These are simple piecewise constant functions that have the advantage of expressing the error in an analytic form. We can predict the positions of the knots by solving Eq.~\eqref{eq:varpro_min} for first-order B-spline approximations (FOBA). In this special case, we provide three different strategies for finding knots by considering not just $\ell_2$, but also $\ell_1, \, \ell_\infty$ norm solutions. The resulting knots can be used to initialize other optimization methods to find better fitting splines of higher order. \par

In order to demonstrate the efficiency of our method in a real-world application, we chose a task from the field of signal compression: decreasing the size of data while keeping the reconstruction error as low as possible. Dimensional reduction of a signal is often a preparatory step before other methods, such as machine learning, are employed. We tested our algorithm with ECG recordings because the signals are especially long (some medical procedures, such as Holter monitoring, can require up to 24 hours of multi-channel recordings). The test set comprised 11 hours of data and included real measurement noise. Our method proved to be robust and suitable for real-time processing of ECGs. The compressed data (i.e.\ knots and coefficients) could also be used in further processing steps, for instance, to analyze irregularities in heartbeats. 



\section{Background}
\label{sec:problem}

\subsection{B-splines}
Let us consider the interval $[a,b]\subset\IR$ and the sequence of ordered knots $a=t_0< t_1< \ldots < t_{n-1} < t_n=b$. We denote the set of polynomials of degree at most $\ell$ by $\mathcal{P}_{\ell}$, and the collection of $\ell$-times continuously differentiable functions on $[a,b]$ by $\mathcal{C}^{\ell}[a,b]$.
 
\begin{definition} $s:[a,b]\to\mathbb{R}$ is a \textsl{spline} of order $\ell+1$ (or degree $\ell$) if
\label{def:splines}
\begin{enumerate}
\renewcommand\labelenumi{(\roman{enumi})}
\item $s\restrict{[t_{k},t_{k+1}]}\in\,\mathcal{P}_{\ell}  \;\; (k=0,\ldots ,n-1)$,
\item $s \in \mathcal{C}^{\ell-1}[a,b]$.
\end{enumerate}
\end{definition}
We will use the vector of knots $\tau_n=\left(t_{-\ell},t_{-\ell+1},\ldots ,t_{n+\ell}\right)^T$ $(t_k\in[a,b])$ with the following boundary conditions:
\begin{equation}
	t_{-\ell}=t_{-\ell+1}=\ldots =t_0=a \quad \text{ and } \quad t_n=t_{n+1}=\ldots =t_{n+\ell}=b\,\textrm{.}
\label{eq:bound_eq}
\end{equation}
Let $\mathcal{S}_{\ell}(\tau_n)$ stand for the corresponding set of spline functions of degree $\ell$. First-order B-splines ($\ell=0$) can then be defined as follows:
\begin{equation}
\label{eq:firstorder}
N_{0,k}(\tau_n;x):=B_{0,k}(\tau_n;x):=\chi_{[t_{k}, t_{k+1})}(x)=
\begin{cases}
1 & \textrm{if $x\in[t_{k}, t_{k+1})$}\textrm{,} \\
0 & \textrm{otherwise}\textrm{.}   
\end{cases}
\end{equation}
Higher-order B-splines ($\ell \geq 1$) are given by the recursion
\begin{equation}
N_{\ell,k}(\tau_n;x)=C \cdot B_{\ell,k}(\tau_n;x)=\left((-1)^{\ell +1}(t_{k+\ell +1}-t_k)\right) \cdot \left[ t_k, \ldots , t_{k+\ell +1}\right](x-t)_+^\ell\,\textrm{,}
\label{eq:bspline_def}
\end{equation}
where $(x-t)^{\ell}_+=(\max \left\{x-t,0\right\})^{\ell}$ are the so-called \textsl{truncated power functions} (TPF). In this definition, the first variable $x$ of each TPF is fixed, and the $(\ell+1)$th divided differences $\left[ t_k, \ldots , t_{k+\ell +1}\right](x-t)_+^\ell$ are computed for the second variable $t$. Note that the usual definition of B-splines would be $N_{\ell,k}$, but for the sake of simplicity we ignore the scaling factor $C$ and apply the notation $B_{\ell,k}$. This form of the B-spline functions $B_{\ell,k}\;(\ell\geq 1)$ yields the following formula for the partial derivatives with respect to the free knots:
\begin{equation}
\frac{\partial B_{\ell,j}(\tau_n;x)}{\partial t_k}=
    \begin{cases}
      \left[ t_j, \ldots t_k,t_k \ldots , t_{j+\ell +1}\right](x-t)_+^\ell & \text{if}\ j\leq k \leq j+\ell+1\textrm{,}\\
      0 & \text{otherwise}\textrm{.}
    \end{cases}
		\label{eq:derBspline}
\end{equation} 
According to Curry and Schoenberg \cite{curryschoenberg}, the linear space $\mathcal{S}_{\ell}(\tau_n)$ is spanned by the B-spline functions $B_{\ell,k}\; (k=-\ell,\ldots,n-1)$. These functions are linearly independent, and thus $\dim \mathcal{S}_{\ell}(\tau_n)=n+\ell$ provided that $t_0,\ldots,t_n$ are pairwise different knots. Guaranteeing this property for the discrete analogue of the problem requires the Schoenberg--Whitney condition to be satisfied: each B-spline's support should contain at least one sample point \cite{schoenberg}.

	In real-world applications, shorter segments of the complete signal are processed. The outer knots can therefore assumed to be fixed, while the inner points $t_1,\ldots ,t_{n-1}\in[a,b]$ are variable. Let $\boldsymbol{\alpha}\in\IR^{n-1}$, and denote the vector of variable knots with boundary conditions by $\tau_n(\boldsymbol{\alpha}):=(t_{-\ell},\ldots,t_0,\boldsymbol{\alpha},t_n,\ldots,t_{n+\ell})^T$. We can now formalize a special case of the VP problem in Eq~\eqref{eq:varpro_min} using B-splines: For a given $\ell\in\IN_+$, the subspace $\mathcal{S}(\boldsymbol{\alpha})=\mathcal{S}_{\ell}(\tau_n(\boldsymbol{\alpha}))$ and the $\varphi_k$ base functions are the corresponding B-splines of degree $\ell$.
	
\subsection{Lethargy Effect}
Divided differences, and thus B-splines, are symmetric functions with respect to their arguments. This implies that if $\boldsymbol{\alpha}\in\IR^{n-1}$ is either a local or a global extremum of the B-spline VP problem in Eq.~\eqref{eq:varpro_min}, any of its permutations will also satisfy this criterion. Another difficulty was discussed by Jupp in \cite{jupp_gammapol,jupp}. As described in \cite{jupp}, the problem can be inspected by first introducing the set:
\begin{equation*}
	s_{n-1}[a,b]=\left\{\boldsymbol{\alpha}\in\IR^{n-1}\,:\,a=t_0<\alpha_1<\alpha_2<\ldots<\alpha_{n-1}<t_n=b\right\}.
\end{equation*}
The $\overline{s}_{n-1}$ closure of said $s_{n-1}$ is an $(n-1)$-simplex that is given by the following constraints:
\begin{equation}
\label{eq:juppconstraint}
	(\alpha_p-\alpha_{p-1}) \geq 0 \qquad (p=1,2,\ldots,n)\,\textrm{.}
\end{equation}
Finally, let $s_{n-1}^{(p)}$ be defined as the \textsl{$p$th (open) main face} of  $\overline{s}_{n-1}$ for which only the $p$th constraint is active (i.e.\ $\alpha_p=\alpha_{p-1}$). On each of these main faces, the B-spline VP functional $r_2(\boldsymbol{\alpha})$ is symmetrical with respect to interchanging the variables $\alpha_{p-1}$ and $\alpha_p$. 
\begin{thm}[Jupp \cite{jupp}, ``Lethargy Theorem'']
\label{theorem:jupp}
On the $p$th main face, $s_{n-1}^{(p)}$,
\begin{equation*}
	\mathbf{n}_p^T \nabla r_2(\boldsymbol{\alpha})=0 \quad  (p=2,3,\ldots,n-1)\,,
\end{equation*}
where $\mathbf{n}_p$ is the unit outward normal to $s_{n-1}^{(p)}$.
\end{thm}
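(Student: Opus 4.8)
The plan is to convert the stated symmetry of $r_2$ into an equality of two partial derivatives and then read off the normal derivative from the geometry of the face. First I would fix the geometry: the $p$th constraint in Eq.~\eqref{eq:juppconstraint} is $g_p(\boldsymbol{\alpha})=\alpha_p-\alpha_{p-1}$, whose gradient in $\IR^{n-1}$ is $\mathbf{e}_p-\mathbf{e}_{p-1}$. Since the feasible simplex is the region $g_p\geq 0$, the unit outward normal to the active face $s_{n-1}^{(p)}$ is $\mathbf{n}_p=\tfrac{1}{\sqrt{2}}(\mathbf{e}_{p-1}-\mathbf{e}_p)$, so that
\begin{equation*}
\mathbf{n}_p^T\nabla r_2(\boldsymbol{\alpha})=\frac{1}{\sqrt{2}}\left(\frac{\partial r_2}{\partial\alpha_{p-1}}(\boldsymbol{\alpha})-\frac{\partial r_2}{\partial\alpha_p}(\boldsymbol{\alpha})\right).
\end{equation*}
It therefore suffices to show that these two partial derivatives coincide on $s_{n-1}^{(p)}$. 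The restriction $p=2,\ldots,n-1$ is exactly the range in which both $\alpha_{p-1}$ and $\alpha_p$ are genuine variables rather than the fixed endpoints $a$ or $b$, so that $\mathbf{n}_p$ lives in the space of free coordinates.

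Next I would encode the interchange as the permutation matrix $P$ that swaps the $(p-1)$th and $p$th coordinates and fixes the rest, so $P=P^T$ and $P^2=I$. The crucial input is that $r_2(P\boldsymbol{\alpha})=r_2(\boldsymbol{\alpha})$ holds not merely on the face but as an identity on a full open neighborhood of it. Indeed, $r_2(\boldsymbol{\alpha})=\big\Vert P^{\bot}_{\mathcal{S}(\boldsymbol{\alpha})}f\big\Vert^2_2$ depends on $\boldsymbol{\alpha}$ only through the subspace $\mathcal{S}_{\ell}(\tau_n(\boldsymbol{\alpha}))$, and this subspace is fixed by the multiset of knots alone: permuting two knot \emph{values} leaves the breakpoints and their multiplicities unchanged, hence leaves the spline space and its orthogonal projector invariant. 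That the construction remains well defined, and that the projector varies smoothly even as the two knots approach one another, is precisely the content of the divided-difference formulation in Eqs.~\eqref{eq:bspline_def}--\eqref{eq:derBspline}.

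With the identity $r_2\circ P=r_2$ available on an open set, I would differentiate by the chain rule, using $P^T=P$:
\begin{equation*}
\nabla r_2(\boldsymbol{\alpha})=\nabla\big(r_2(P\boldsymbol{\alpha})\big)=P\,\nabla r_2(P\boldsymbol{\alpha}).
\end{equation*}
Evaluating on the main face, where $\alpha_{p-1}=\alpha_p$ forces $P\boldsymbol{\alpha}=\boldsymbol{\alpha}$, this collapses to $P\,\nabla r_2(\boldsymbol{\alpha})=\nabla r_2(\boldsymbol{\alpha})$; that is, $\nabla r_2(\boldsymbol{\alpha})$ is a fixed point of $P$, which says exactly that its $(p-1)$th and $p$th components agree. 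Substituting into the displayed expression for $\mathbf{n}_p^T\nabla r_2$ gives $\mathbf{n}_p^T\nabla r_2(\boldsymbol{\alpha})=0$, as claimed.

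I expect the main obstacle to lie not in the algebra but in legitimizing the chain-rule step, which differentiates $r_2$ in the direction \emph{transverse} to the face: on $s_{n-1}^{(p)}$ two interior knots coalesce, so I must guarantee both that $r_2$ is continuously differentiable there in all coordinate directions and that the symmetry holds on an open neighborhood — not just on the lower-dimensional face, where it would say nothing about the normal derivative. Establishing these two points (smoothness of $P^{\bot}_{\mathcal{S}(\boldsymbol{\alpha})}$ across coalescence via the confluent divided differences, and invariance of the spline subspace under permuting knot values) is the heart of the argument; once they are in place, the vanishing of the normal derivative is immediate.
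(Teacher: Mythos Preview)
The paper does not supply its own proof of this theorem: it is stated as Jupp's result and only the consequence (many stationary points on the main faces) is discussed. What the paper does offer, in the sentence immediately preceding the theorem, is the single observation that drives the argument: ``On each of these main faces, the B-spline VP functional $r_2(\boldsymbol{\alpha})$ is symmetrical with respect to interchanging the variables $\alpha_{p-1}$ and $\alpha_p$.'' Your proposal is precisely the natural formalization of that remark: encode the interchange as the transposition $P$, use that the spline subspace depends only on the multiset of knots to get $r_2\circ P=r_2$ on a neighborhood, differentiate, and evaluate on the fixed-point set $\alpha_{p-1}=\alpha_p$ to force equality of the two partial derivatives. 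This is exactly the line Jupp takes in the cited paper, so your approach matches the intended one.

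Your identification of the delicate point is also correct: the algebra is trivial once you have (i) the symmetry as an identity on an open set, not just on the face, and (ii) $C^1$-regularity of $r_2$ across the coalescence $\alpha_{p-1}=\alpha_p$. Point (i) is immediate from the divided-difference definition, which is symmetric in its nodes. Point (ii) is where the work lives; the confluent divided differences in Eqs.~\eqref{eq:bspline_def}--\eqref{eq:derBspline} give smooth dependence of the basis, and hence of $\boldsymbol{\Phi}(\boldsymbol{\alpha})$ and $P^{\bot}_{\mathcal{S}(\boldsymbol{\alpha})}$, through a double knot --- provided the Schoenberg--Whitney condition keeps $\boldsymbol{\Phi}$ of full column rank there, so that the pseudoinverse is smooth. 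You have flagged exactly this, so the proposal is sound.
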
 	

Note that the statement of this theorem is independent of the approximated function $f$. As a consequence, the main faces contain many \textsl{stationary points} (i.e.\ extrema or saddle points), at which the gradient is zero. This can cause gradient-based methods to stop prematurely. For example, if two knots get too close to one another, the gradient in the $\mathbf{n}_p$ direction is small, and the algorithm will therefore erroneously search for the optimum on the $p$th main face. Fig.~\ref{fig:pelda} shows this phenomenon for cubic splines ($\ell=3$), where $\boldsymbol{\alpha}\in\IR^2$. The upper graph in Fig.~\ref{fig:face} plots the cross section of $r_2$ along the main face. Here, the knot vectors $\boldsymbol{\alpha}^{(1)},\boldsymbol{\alpha}^{(3)}$ are local minima, while $\boldsymbol{\alpha}^{(2)}$ is a saddle point of $r_2$. There are two global minima (green crosses), for which the corresponding cubic B-spline approximation is shown in the lower graph in Fig.~\ref{fig:face}. The problems mentioned above are particularly important in the context of ECG recordings. These signals contain diagnostically important waves (e.g. the QRS complex), which require more knots to be inserted in a small area. Since these knots will be relatively close to each other, the problems resulting from the Lethargy Theorem can in some cases manifest in practice.

\begin{figure}[!htb]
\centering
  \subfigure[Graph of $r_2(\boldsymbol{\alpha})$ where $\boldsymbol{\alpha}\in\IR^2$.]{
  \includegraphics[scale=0.50, trim=140 265 154 265, clip]{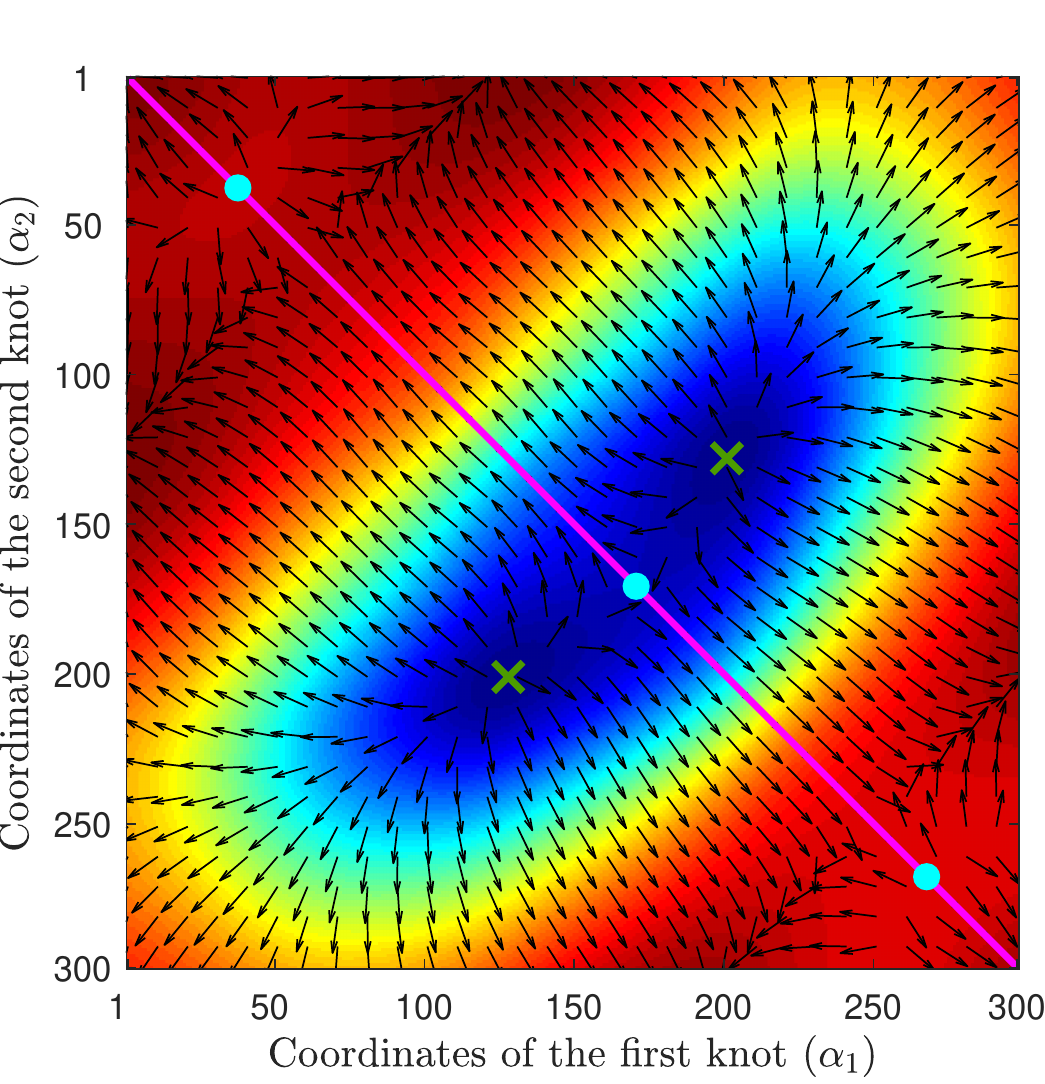}
  \label{fig:varpro2d}
  }\hspace{2mm}
  \subfigure[Extrema of the $s_{2}^{(2)}$ main face.]{
  \includegraphics[scale=0.50, trim=140 265 135 265, clip]{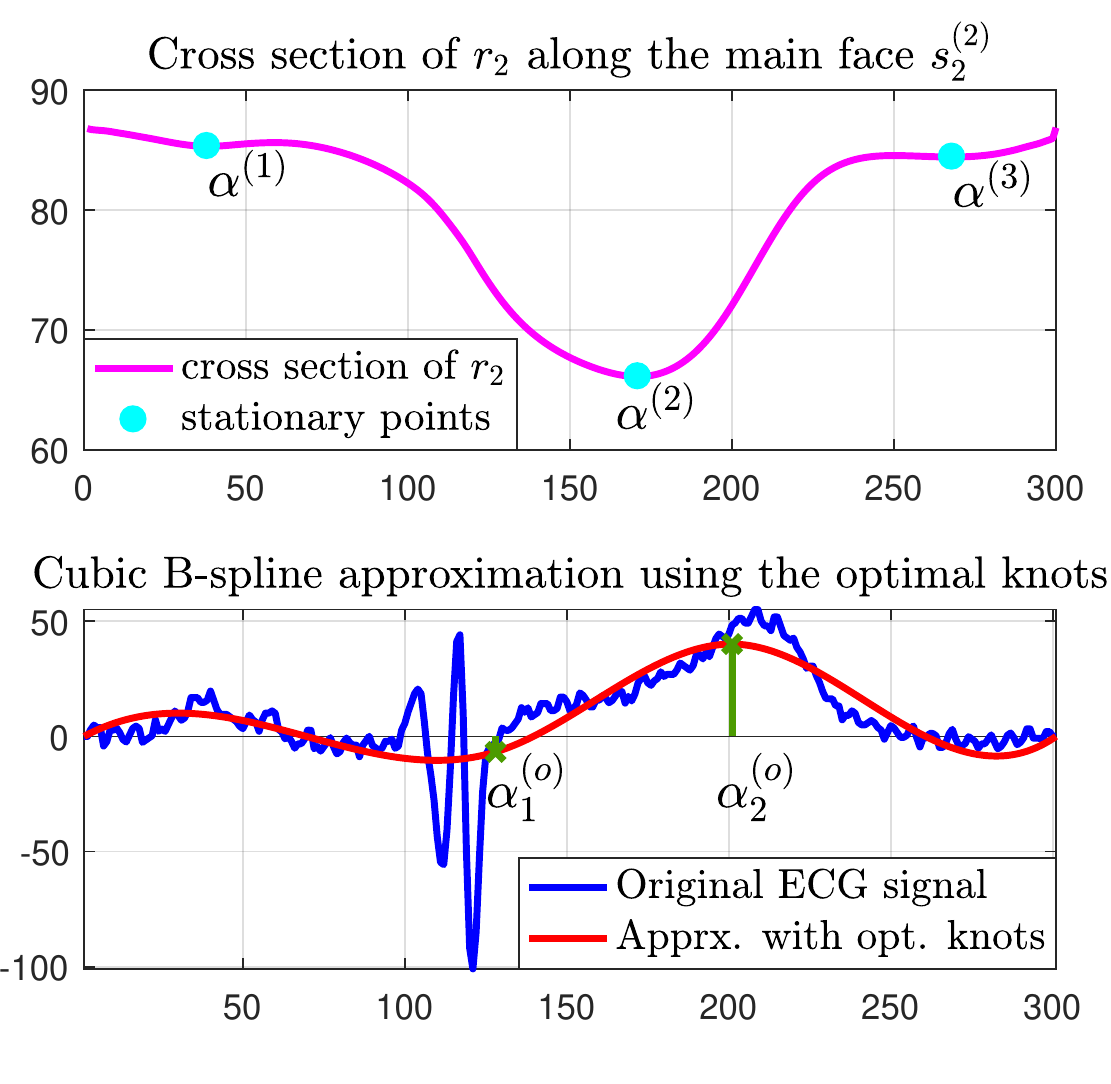}
  \label{fig:face}
  } 
\caption{Critical points of $r_2(\boldsymbol{\alpha})$ for two free knots. The black arrows indicate the normalized gradient vectors of the objective function $r_2(\boldsymbol{\alpha})$. The diagonal line represents the second main face, and its stationary points are marked by light blue dots.}
\label{fig:pelda}
\end{figure}

\section{First-order B-spline approximations (FOBA)}
\label{sec:foba}

\subsection{\texorpdfstring{Solution for $p=2$}{Solution for $p=2$}}

Finding an exact formula for the minimum of the B-spline VP functional is almost impossible, since changing one knot affects both the base functions and the corresponding coefficients in a nonlinear way. The only case in which this could be done is $\ell=0$, where the first-order B-splines are equal to the characteristic functions of subintervals $[t_k, t_{k+1})$. As the supports of these functions are distinct, changing the $q$th knot affects only two base functions: $B_{0,q-1}$ and $B_{0,q}$ and their corresponding coefficients $c_{q-1}$ and $c_q$. We use this simple relation to develop an error formula and to predict the positions of the knots. 

Provided that the functions $\varphi_k(\boldsymbol{\alpha},\cdot)\;(k=1,\ldots,n)$ are linearly independent for any value of $\boldsymbol{\alpha}$, the least-squares error in Eq.~\eqref{eq:varpro_min} can be written as
\begin{equation}
\| f-P_{\mathcal{S}(\boldsymbol{\alpha})}f\|_2^2=\|f\|_2^2-\m{b}^T(\boldsymbol{\alpha}) \m{c}(\boldsymbol{\alpha})=\|f\|_2^2-\m{b}^T(\boldsymbol{\alpha}) \m{G}^{-1}(\boldsymbol{\alpha}) \mathbf{b}(\boldsymbol{\alpha})\,\textrm{,}
\label{eq:nonorthog_bessel}
\end{equation}
where  $\m{G}_{ij}(\boldsymbol{\alpha})=\left\langle \varphi_i(\boldsymbol{\alpha};\cdot),\varphi_j(\boldsymbol{\alpha};\cdot)\right\rangle$, $\m{b}_i(\boldsymbol{\alpha})=\left\langle f(\cdot),\varphi_i(\boldsymbol{\alpha};\cdot)\right\rangle$ for $1\leq i,j\leq n$, and $\mathbf{b}^T(\boldsymbol{\alpha}) \m{G}^{-1}(\boldsymbol{\alpha}) \mathbf{b}(\boldsymbol{\alpha})>0$. Therefore, solving Eq.~\eqref{eq:varpro_min} is equivalent to 
\begin{equation}
	\max_{\boldsymbol{\alpha}\in\IR^m}\,\m{b}^T(\boldsymbol{\alpha}) \m{G}^{-1}(\boldsymbol{\alpha})\m{b}(\boldsymbol{\alpha})\textrm{.}
	\label{eq:varpro_max}
\end{equation}

Due to its orthogonal property, a system consisting of first-order B-splines $B_{0,k}\;(0\leq k\leq n-1)$ reduces the complexity of computations. Thus, for the vector of free knots $\boldsymbol{\alpha}\in\IR^{n-1}$, the corresponding Gramian matrix $\m{G}(\boldsymbol{\alpha})$ is diagonal:
\small
\begin{equation*}
\m{G}(\boldsymbol{\alpha})=\diag\Big(\langle B_{0,0},B_{0,0}\rangle,\ldots ,\langle B_{0,n-1},B_{0,n-1}\rangle\Big)=\diag\Big(\| B_{0,0}\|_2^2,\ldots ,\| B_{0,n-1}\|_2^2\Big)\textrm{,}
\end{equation*}
\normalsize
where the functions $B_{0,k}$ are defined by the knot vector $\tau_n(\boldsymbol{\alpha})$. The squared norms of these B-splines on the interval $[a,b]$ can easily be calculated as
\begin{equation*}
\| B_{0,k}\|_2^2=\langle B_{0,k},B_{0,k}\rangle=\int_{a}^{b} \chi^2_{[t_{k}, t_{k+1})}(x)  \, \mathrm{d}x=t_{k+1}-t_k\qquad (0\leq k \leq n-1)\textrm{.}
\end{equation*}
Similarly, for the $k$th element of the vector $\m{b}(\boldsymbol{\alpha})$ applies the following:
\begin{equation*}
\m{b}_k(\boldsymbol{\alpha})=\langle f,B_{0,k}\rangle=\int_{a}^{b} f(x)\cdot \chi_{[t_{k}, t_{k+1})}(x) \,\mathrm{d}x=\int_{t_{k}}^{t_{k+1}} f(x)\,\mathrm{d}x=:F_k(t_{k+1})
\end{equation*}
with $0\leq k \leq n-1$. In summary, the linear system of equations to be solved is
\small
\begin{equation*}
\left[
\begin{matrix}
\|B_{0,0}\|_2^2&0&\ldots&\ldots&\ldots&0\\
0&\ddots&&&&\vdots&\\
\vdots&&\|B_{0,q-1}\|_2^2&&&\vdots\\
\vdots&&&\|B_{0,q}\|_2^2&&\vdots\\
\vdots&&&&\ddots&0\\
0&\ldots&\ldots&\ldots&0&\|B_{0,n-1}\|_2^2\\
\end{matrix}
\right]
\left[
\begin{array}{l}
c_{0}\\
\vdots\\
c_{q-1}\\
c_{q}\\
\vdots\\
c_{n-1}
\end{array}
\right]
\hspace{-1mm}
=\hspace{-1mm}
\left[
\begin{array}{l}
F_{0}(t_1)\\
\vdots\\
F_{q-1}(t_q)\\
F_q(t_{q+1})\\
\vdots\\
F_{n-1}(t_n)
\end{array}
\right]\textrm{.}
\end{equation*}
\normalsize
Let us consider the case in which all knots are fixed except $t_q\in (t_{q-1},t_{q+1})$, which is free and whose optimal position is to be determined so that it maximizes Eq.~\eqref{eq:varpro_max}. In the case of first-order B-splines, moving the $q$th knot in $(t_{q-1},t_{q+1})$ changes only the coefficients $c_{q-1}$ and $c_q$, for which we have the formulas
\begin{equation}
c_{q-1}=\frac{F_{q-1}(t_q)}{\|B_{0,q-1}\|_2^2}=\frac{F_{q-1}(t_q)}{t_q-t_{q-1}}\textrm{,}\qquad c_{q}=\frac{F_{q}(t_{q+1})}{\|B_{0,q}\|_2^2}=\frac{F_{q}(t_{q+1})}{t_{q+1}-t_{q}}\textrm{.}
\label{eq:l2coeffs}
\end{equation}
In order to decide where the new knot is to be inserted, we compute the optimal positions $\alpha_q$ for all the intervals $[t_q,t_{q+1}]$ and then choose the best among them. That is for each $q=0,\ldots,n-1$, the following maximum search problem must be solved:
\begin{equation*}
\alpha_q=\underset{t_{q}\leq \alpha \leq t_{q+1}}{\arg\max}\,\m{b}^T(\alpha)\m{G}^{-1}(\alpha)\m{b}(\alpha)=\underset{t_{q} \leq \alpha \leq t_{q+1}}{\arg\max}\, \frac{\left(F_{q}(\alpha)\right)^2}{\alpha-t_{q}}+\frac{\left(F_{q+1}(\alpha)\right)^2}{t_{q+1}-\alpha}\textrm{.}
\end{equation*}
The problem can be further simplified because $F_{q+1}(\alpha)$ is equal to the difference between $F_{q}(t_{q+1})$ and $F_{q}(\alpha)$. We can also change the sign of the cost function above to obtain a minimum search problem, and the following proposition:
\begin{proposition}
\label{theorem:opt_e2}
Let us consider the knots $a=t_0<t_1<\ldots<t_n=b$, the corresponding system of first order B-splines $B_{0,k}\;(0\leq k\leq n-1)$, and the function $f\in\mathcal{C}[a,b]$. Inserting a new single knot $\alpha\in[a,b]$ according to Eq.~\eqref{eq:varpro_min} is then equivalent to solving the optimization problem: 
\begin{equation}
\label{eq:maxoshiba}
\underset{{0\leq q<n}}{\min} \; \underset{t_{q} \leq \alpha \leq t_{q+1}}{\min}\;e_2(\alpha):=\underset{{0\leq q < n}}{\min} \; \underset{t_{q} \leq \alpha \leq t_{q+1}}{\min}\; -\frac{\left(F_{q}(\alpha)\right)^2}{\alpha-t_{q}}-\frac{\left(F_{q}(t_{q+1}) - F_{q}(\alpha)\right)^2}{t_{q+1}-\alpha}\,\mathrm{.}
\end{equation}
\end{proposition}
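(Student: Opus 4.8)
The plan is to turn the variable-projection minimization of Eq.~\eqref{eq:varpro_min} into an elementary separable optimization by exploiting the fact that first-order B-splines have pairwise disjoint supports, and then to isolate the single term of that separable objective which is affected by the newly inserted knot.

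I would begin from the equivalence, already recorded in Eqs.~\eqref{eq:nonorthog_bessel} and~\eqref{eq:varpro_max}, between minimizing $\|f-P_{\mathcal{S}(\boldsymbol{\alpha})}f\|_2^2$ and maximizing $\m{b}^T(\boldsymbol{\alpha})\m{G}^{-1}(\boldsymbol{\alpha})\m{b}(\boldsymbol{\alpha})$. Because the supports $[t_k,t_{k+1})$ are disjoint, the Gramian is diagonal with entries $\|B_{0,k}\|_2^2 = t_{k+1}-t_k$ and $\m{b}_k = F_k(t_{k+1})$, so the objective collapses to the separable sum $\sum_{k} (F_k(t_{k+1}))^2/(t_{k+1}-t_k)$. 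This separability is the structural fact that makes a closed-form treatment possible and is the whole reason for working with $\ell = 0$.

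Next I would analyze the effect of a single insertion. If the new knot $\alpha$ lies in $[t_q,t_{q+1}]$, the refined knot vector splits the characteristic function $B_{0,q}$ into the characteristic functions of $[t_q,\alpha)$ and $[\alpha,t_{q+1})$, while every other $B_{0,k}$, and hence every other summand, is left unchanged. Computing the two new ratios and substituting the additive identity $F_{q+1}(\alpha)=\int_\alpha^{t_{q+1}} f = F_q(t_{q+1})-F_q(\alpha)$, the single summand $(F_q(t_{q+1}))^2/(t_{q+1}-t_q)$ is replaced by $(F_q(\alpha))^2/(\alpha-t_q) + (F_q(t_{q+1})-F_q(\alpha))^2/(t_{q+1}-\alpha)$. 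Since the remaining summands do not depend on $\alpha$, maximizing the projection energy over $\alpha$ inside the fixed interval $[t_q,t_{q+1}]$ is exactly the maximization of this two-term expression; changing its sign yields the inner minimization of $e_2$, and ranging over $q$ gives the outer minimum.

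The step I expect to demand the most care is the bookkeeping behind ``only one summand changes'', and especially the transition from the inner optimization in $\alpha$ to the outer comparison over $q$. For fixed $q$ the reduction is immediate, since the untouched summands contribute an additive constant that has no influence on the minimizer; the delicate issue is the global comparison, because the baseline energy $(F_q(t_{q+1}))^2/(t_{q+1}-t_q)$ of the split interval itself varies with $q$. I would control this by invoking the Cauchy--Schwarz (Engel) inequality $a^2/x + b^2/y \ge (a+b)^2/(x+y)$ with $a=F_q(\alpha)$, $b=F_q(t_{q+1})-F_q(\alpha)$, $x=\alpha-t_q$, $y=t_{q+1}-\alpha$, which shows the two-term expression never drops below that baseline and equals it precisely at $\alpha\in\{t_q,t_{q+1}\}$; hence an insertion can only lower the least-squares error. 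Keeping explicit track of this per-interval baseline is what makes the argument rigorous and is the natural place where the proof must be handled with care.
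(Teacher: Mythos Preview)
Your derivation mirrors the paper's almost step for step: Eq.~\eqref{eq:nonorthog_bessel}--\eqref{eq:varpro_max} to turn the VP minimization into an energy maximization, diagonality of $\m G$, the separable sum $\sum_k (F_k(t_{k+1}))^2/(t_{k+1}-t_k)$, the observation that inserting $\alpha\in[t_q,t_{q+1}]$ replaces a single summand by two, and finally the sign flip. That is exactly the argument the paper gives in the paragraphs preceding the proposition, so on the level of approach you and the paper agree.

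Where you go further than the paper is in flagging the baseline issue, and here you have put your finger on a real gap --- but your proposed fix does not close it. The Cauchy--Schwarz/Engel inequality gives
\[
\frac{(F_q(\alpha))^2}{\alpha-t_q}+\frac{(F_q(t_{q+1})-F_q(\alpha))^2}{t_{q+1}-\alpha}\;\ge\;\frac{(F_q(t_{q+1}))^2}{t_{q+1}-t_q},
\]
which only says that inserting a knot never \emph{increases} the least-squares error. It does not make $\min_{q,\alpha} e_2(\alpha)$ coincide with the VP-optimal insertion. The VP error after insertion equals, up to a constant independent of both $q$ and $\alpha$,
\[
\frac{(F_q(t_{q+1}))^2}{t_{q+1}-t_q}\;+\;e_2(\alpha),
\]
so the correct outer objective carries the baseline term. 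A clean counterexample: take two subintervals, with $f$ a large constant on the first (baseline large, $e_2$ a large negative constant, zero improvement) and $f$ changing sign across the midpoint of the second with integral zero (baseline $0$, $e_2$ small in magnitude, genuine improvement). Then $\min_{q,\alpha} e_2$ selects the first interval, whereas the VP-optimal insertion lies in the second.

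The paper's own derivation glosses over exactly this point: its displayed $\arg\max$ of $\m b^T\m G^{-1}\m b$ is correct for the inner optimization, but the passage to the two-term expression and then to the double $\min$ in Eq.~\eqref{eq:maxoshiba} silently drops the $q$-dependent baseline. So your instinct that this is ``the natural place where the proof must be handled with care'' is right; the honest repair is not the Engel bound but to keep the baseline in the outer comparison (equivalently, to minimize $e_2(\alpha)+(F_q(t_{q+1}))^2/(t_{q+1}-t_q)$ over $q$ and $\alpha$).
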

Note that the functions $F_{q}$ and $F_{q+1}$, and thus also $e_2$, are differentiable provided that $f\in\mathcal{C}[a,b]$. Additionally, the proper one-sided limits are finite at the boundary points $t_{q},\,t_{q+1}$. Hence, the function $e_2$ can be extended continuously on the whole interval $\left[t_{q},t_{q+1}\right]$, which means that there exists at least one global minimum. It follows that gradient-based optimizations can be applied to find the points $\alpha_q$ that minimize $e_2$ for each interval $[t_q,t_{q+1}]$. 

\subsection{\texorpdfstring{Solution for $p=1,\infty$}{Solution for $p=1,\infty$}}
\label{sec:foba_othernorms}

The general problem in Eq.~\eqref{eq:lp_opt} becomes more difficult for $p=1,\infty$. Although, the minimum of the full functional still exists for a given $\boldsymbol{\alpha}\in\IR^m$, computing the corresponding coefficient vector $\m{c}\in\IR^n$ is a time-consuming iterative process \cite{cadzow,hegedus}. Addressing this issue, we reuse the idea from the previous section and restrict the optimization to a single knot. We assume that a certain number of knots $a=t_0 < t_1 <\ldots < t_{n-1} < t_n=b$ have already been inserted into the knot vector and that the position of the next knot is to be determined. For a system of first-order B-splines, changing a particular knot affects only two consecutive base functions and their corresponding coefficients. Thus, for $p=1,\infty$, the optimization in Eq.~\eqref{eq:lp_opt} is simplified to the following subproblems:
\begin{equation}
\underset{\m{c}\in\IR^2,\, t_q\leq \alpha \leq t_{q+1}}{\min} r_p(\m{c},\alpha)\qquad (0\leq q<n)\,.
\label{eq:lp_opt_singleknot}
\end{equation}
First-order B-splines are piecewise constant functions for which the corresponding coefficients are well defined in these norms \cite{natanson,watson}. Let us consider the function $f\in\mathcal{C}[a,b]$, which we want to approximate in the form $c^{(p)}\cdot \chi_{[a,b]}$. The coefficients of the best linear approximations in the cases $p=1,\infty$ are then
\begin{equation}
	c^{(1)}=\text{med} f , \qquad c^{(\infty)}=\frac{\min f+\max f}{2},
\label{eq:recall_lpcoeff}
\end{equation}
where $\text{med} f,\,\min f$ and $\max f$ denote the median, the minimum and the maximum values of $f$, respectively. We are now able to find the optimal knot candidate $\alpha_q$ in each interval $[t_q,t_{q+1}]$ for $q=0,\ldots,n-1$, and the final choice is the best among them.  
\begin{proposition}
\label{theorem:opt_e1inf}
In the notations of Eq.~\eqref{eq:firstorder} and under the assumptions of Proposition~\ref{theorem:opt_e2}, the position of a new single knot, according to Eq.~\eqref{eq:lp_opt_singleknot} with $p=1,\infty$, is given by the optimization problem
\small
\begin{equation*}
\underset{{0\leq q<n}}{\min} \; \underset{t_q\leq\alpha\leq t_{q+1}}{\min} e_p(\alpha)=\underset{{0\leq q<n}}{\min} \;\underset{t_q\leq\alpha\leq t_{q+1}}{\min} \left\|f\restrict{[t_{q},t_{q+1}]}- c^{(p)}_1\cdot\chi_{[t_q,\alpha]} - c^{(p)}_2\cdot \chi_{[\alpha,t_{q+1}]} \right\|_p\textrm{.}
\label{eq:lp_opt_efun}
\end{equation*}
\normalsize
\end{proposition}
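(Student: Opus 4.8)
The plan is to mirror the derivation that produces Proposition~\ref{theorem:opt_e2}, but to replace the explicit $\ell_2$ projection by the best-constant approximations recalled in Eq.~\eqref{eq:recall_lpcoeff}. The structural fact I would exploit is the one used throughout this section: inserting a single knot $\alpha$ into $[t_q,t_{q+1}]$ splits that interval into the two subintervals $[t_q,\alpha]$ and $[\alpha,t_{q+1}]$, whose characteristic functions have disjoint supports, while the first-order B-splines on every other interval are left untouched. Because of this disjointness the $\ell_p$ approximation error separates, so that the minimization over the two free coefficients $\m{c}\in\IR^2$ in Eq.~\eqref{eq:lp_opt_singleknot} decouples from the minimization over $\alpha$, and the error contribution of the untouched intervals plays no role in locating the new knot.

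First I would make this separation precise for each norm. For $p=1$ the integral defining $\|\cdot\|_1$ is additive over the disjoint subintervals, so for fixed $\alpha$ the dependence on $c_1$ and on $c_2$ splits into two independent terms, one on $[t_q,\alpha]$ and one on $[\alpha,t_{q+1}]$. For $p=\infty$ the norm is a maximum rather than a sum, so the decoupling instead rests on the minimax identity $\min_{c_1,c_2}\max\bigl(A(c_1),B(c_2)\bigr)=\max\bigl(\min_{c_1}A(c_1),\min_{c_2}B(c_2)\bigr)$, valid because $A$ and $B$ depend on different variables; this again lets me optimize $c_1$ on $[t_q,\alpha]$ and $c_2$ on $[\alpha,t_{q+1}]$ independently. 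In both cases the inner problem reduces, for each fixed $\alpha$, to two separate best-constant approximation problems on the two subintervals.

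Next I would invoke Eq.~\eqref{eq:recall_lpcoeff}: the $\ell_1$-optimal constant on a subinterval is $\text{med} f$ there, and the $\ell_\infty$-optimal constant is the midrange $(\min f+\max f)/2$. Substituting these optimal values $c^{(p)}_1,c^{(p)}_2$ back into the restricted error on $[t_q,t_{q+1}]$ collapses the minimization over $\m{c}$ and leaves exactly the univariate objective $e_p(\alpha)$ displayed in the statement; taking $\min$ over $\alpha\in[t_q,t_{q+1}]$ and then over $q$ gives the asserted double minimum. Well-definedness follows from the hypothesis $f\in\mathcal{C}[a,b]$ inherited from Proposition~\ref{theorem:opt_e2}, which guarantees that $\min f$, $\max f$, and $\text{med} f$ exist and are finite on every closed subinterval; the overlap of $\chi_{[t_q,\alpha]}$ and $\chi_{[\alpha,t_{q+1}]}$ at the single point $\alpha$ is harmless, contributing zero to the $\ell_1$ integral and, by continuity, not affecting the $\ell_\infty$ supremum.

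The only genuine subtlety I expect is the $p=\infty$ case. Unlike the additive $\ell_1$ error, the supremum norm does not split additively, so both the interchange $\min\max=\max\min$ above and the observation that, under a supremum norm, only the single worst interval governs the global error must be handled with care; this is precisely what justifies reducing to the local objective $e_\infty$ and then comparing candidates through the outer $\min_q$. Everything else is the routine bookkeeping already carried out for $p=2$.
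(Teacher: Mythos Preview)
The paper does not supply a separate proof of this proposition; it is stated as the summary of the discussion immediately preceding it, which records exactly the two ingredients you use---that inserting $\alpha$ into $[t_q,t_{q+1}]$ affects only the two characteristic functions on that interval, and that the best constant on each piece is given by Eq.~\eqref{eq:recall_lpcoeff}. Your proposal is correct and follows the same line; in fact your explicit treatment of the $p=\infty$ decoupling via $\min_{c_1,c_2}\max\bigl(A(c_1),B(c_2)\bigr)=\max\bigl(\min_{c_1}A(c_1),\min_{c_2}B(c_2)\bigr)$ and your remark about how the untouched intervals interact with the outer $\min_q$ under the sup norm make precise points the paper leaves implicit.
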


\section{Optimization of the free knots}

\subsection{Knot-prediction algorithm}
\label{sec:knotpred}

We use the ideas proposed in the previous section to predict the optimal knots of first-order B-spline approximations in the sense of Eq.~\eqref{eq:lp_opt}. The resulting knot vector can be used to initialize numerical optimization methods to find better-fitting splines of higher order. Note that the same approach can be used for the three cases $p=1,2,\infty$. In fact, only the objective function $e_p$ should be changed in the algorithm.

\begin{algorithm}
\caption{Knot-prediction by FOBA.}
\label{alg:knotpred}
\begin{algorithmic}[1]
\STATE \textbf{function} \textsc{KnotPred}($a,\,b,\,f,\,n,\,\delta,\,p$)
\STATE Initialize the knot vector $\m{t}=(a,\,b)^T$	
\STATE Initialize the vector of candidate knots $\boldsymbol{\alpha}$ and their errors $\m{e}$ 
\STATE{Insert the first knot into $\m{t}$ and set $q=1$}
\WHILE{$\dim \m{t} \leq n$}
\FOR{$j=q,\,q+1$}
\IF{$(t_{j+1}-t_j)>\delta$}
\STATE{Find the optimal knot $\alpha_j\in[t_j+\delta,\,t_{j+1}-\delta]$}
\STATE{Insert $\alpha_j$ into the vector of candidate knots $\boldsymbol{\alpha}$}
\STATE{Insert $e_p(\alpha_j)$ into the vector of candidate knot errors $\m{e}$}
\ENDIF
\ENDFOR
\STATE{Update the other elements of $\m{e}$ with the values of $e(\alpha_q)$ and $e(\alpha_{q+1})$}
\STATE{Choose the leftmost candidate knot $\alpha_{opt}$ with the smallest value in $\m{e}$}
\STATE{Insert $\alpha_{opt}$ into the knot vector $\m{t}$}
\STATE{Set $q$ to the corresponding index of $\alpha_{opt}$ in $\m{t}$}
\ENDWHILE
\RETURN $\m{t}$
\STATE \textbf{end function}
\end{algorithmic}
\end{algorithm}
An outline of the proposed method is provided in Alg.~\ref{alg:knotpred}, where $f\in L^p([a,b])$, the number of knots is $n+1$, and the norm index $p$ is given by the user, while the positions of the predicted knots $\m{t}$ are returned. The parameter $\delta \geq 0$ controls the minimum distance between consecutive knots; that is, if two neighboring knots $t_j,t_{j+1}$ are too close to each other, we do not insert another knot in between them (e.g., \ $\delta=1$ for discrete time series). Note that the error values must be updated, so that they correspond to the improvement gained by inserting the $q$th knot in the previous iteration. This step can be found in line $13$ of Alg.~\ref{alg:knotpred}. Furthermore, if the vector $\m{e}$ has more than one minimum, then let us choose the smallest one possible (i.e.,\ the leftmost) as the candidate knot $\alpha_{opt}$. 

The first-order B-splines $B_{0,k}\;(k=0,\ldots,n-1)$ intersect the fitted function $f$ in at least one point over their support $[t_k,t_{k+1}]$. This property is independent of the norm $(p=1,2,\infty)$ used in the approximation. Additionally, when $f\in\mathcal{C}^1[a,b]$, we can apply the well-known error formula for polynomial interpolation 
\begin{equation*}
	\left|f(x)-B_{0,k}(\tau_n;x)\right| \leq M_k \cdot (t_{k+1}-t_{k})=M_k\cdot h_k \qquad \left(x\in[t_k,t_{k+1}]\right)\,,
\end{equation*}
where $M_k=\max_{x\in[t_k,t_{k+1}]}\left|f'(x)\right|$. This estimation reveals that the first derivative $M_k$ is inversely proportional to the knot spacing $h_k$ (see Lectures 10-11 in \cite{stewart}). Therefore, the number of knots should be high near high derivatives of $f$. This is why Alg.~\ref{alg:knotpred} inserts numerous knots near sudden changes of the signal. 

On one hand, the method described above uses a greedy strategy and optimizes only a single coordinate of $\boldsymbol{\alpha}\in\IR^{n-1}$ in each step. Consequently, it provides a suboptimal solution to the full problem in Eq.~\eqref{eq:lp_opt}. On the other hand, it is possible to predict the exact positions of the knots in simple cases, when $f$ is a step function. 
\begin{lemma}
\label{theorem:foba}
Let $\ell=0$, $\tau_{n}$ be the original knot vector and $f\in\mathcal{S}_0(\tau_{n})$ be a step function for which the number of knots $m$, $m < n$, is to be predicted. Alg.~\ref{alg:knotpred} with  $p=1,2,\infty$ then returns a subset of the original knots.
\end{lemma}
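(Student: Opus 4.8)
The plan is to argue by induction on the number of inserted knots, reducing everything to a single-interval claim. The loop in Alg.~\ref{alg:knotpred} starts from $\m{t}=(a,b)^T=(t_0,t_n)^T$, both of which are original knots, and at each pass inserts the position $\alpha_{opt}$ that minimizes $e_p$ over all current intervals. Hence it suffices to establish the following per-step statement: \emph{if the current interval $[t_q,t_{q+1}]$ (whose endpoints are original knots) contains the interior original breakpoints $s_1<\dots<s_r$ of the step function $f$, then every minimizer of $e_p$ on $[t_q,t_{q+1}]$ lies at some $s_i$; and if $f$ is constant on $[t_q,t_{q+1}]$, then $e_p$ is constant there.} Granting this, the inserted knot is always an original breakpoint, and since $m<n$ there is always an interval still containing an un-used breakpoint (hence offering a strict improvement that is preferred over any constant interval, where $e_p$ yields no improvement), so the returned knots form a subset of the original ones. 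A mild assumption is needed that the spacing $\delta$ does not exclude the original knots (e.g.\ $\delta=1$ for a discrete series whose breakpoints sit on distinct samples).

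For $p=2$ I would prove the per-step claim by a convexity argument. Writing $u=\alpha-t_q$, $v=t_{q+1}-\alpha$, and using the optimal coefficients of Eq.~\eqref{eq:l2coeffs}, the quantity to be maximized is the projection gain $G(\alpha):=-e_2(\alpha)=c_1(\alpha)^2u+c_2(\alpha)^2v=F_q(\alpha)^2/u+(F_q(t_{q+1})-F_q(\alpha))^2/v$, where $c_1,c_2$ are the means of $f$ on $[t_q,\alpha]$ and $[\alpha,t_{q+1}]$. On a sub-interval $(s_i,s_{i+1})$ where $f\equiv d_i$ one has $c_1'=(d_i-c_1)/u$ and $c_2'=(c_2-d_i)/v$, and a short computation gives the clean expressions
\begin{equation*}
G'(\alpha)=(d_i-c_2)^2-(d_i-c_1)^2,\qquad G''(\alpha)=\frac{2(d_i-c_1)^2}{u}+\frac{2(d_i-c_2)^2}{v}\ge 0\,.
\end{equation*}
Thus $G$ is convex on each sub-interval between consecutive breakpoints, so its maximum over $[t_q,t_{q+1}]$ is attained at one of the points $t_q,s_1,\dots,s_r,t_{q+1}$. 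On the first sub-interval $c_1=d_0$, giving $G'=(d_0-c_2)^2>0$ (as $c_2$ mixes $d_0$ with other values), and on the last $c_2=d_r$, giving $G'=-(d_r-c_1)^2<0$; this rules out the endpoints $t_q,t_{q+1}$ and forces the maximizer to an interior breakpoint $s_i$, as required. When $f$ is constant, $r=0$ and $G\equiv d_0^2(t_{q+1}-t_q)$ is independent of $\alpha$.

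For $p=\infty$ the per-step claim is even more transparent: the best-constant error on any piece is half its range, so $e_\infty(\alpha)=\tfrac12\max\big(\mathrm{range}\, f\restrict{[t_q,\alpha]},\ \mathrm{range}\, f\restrict{[\alpha,t_{q+1}]}\big)$, and for a step function each range is unchanged as $\alpha$ moves inside a sub-interval $(s_i,s_{i+1})$ (the constant value $d_i$ already lies within it). Hence $e_\infty$ is piecewise constant in $\alpha$ with jumps only at the $s_i$, so its minimum is attained at a breakpoint. For $p=1$ I would run the same scheme with $c^{(1)}=\mathrm{med}\,f$: on each sub-interval the weighted median and the resulting $L^1$ error are piecewise-linear in $\alpha$, and one checks the error has no interior minimum strictly between two breakpoints, again pushing the optimum onto some $s_i$.

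The main obstacle I anticipate is precisely the analysis on the \emph{interior} sub-intervals, i.e.\ ruling out a spurious optimum strictly between two original breakpoints; the first and last sub-intervals only settle the easy endpoints. For $p=2$ this is resolved cleanly by the identity $G''\ge0$ above, and for $p=\infty$ by the piecewise-constancy of the range. The delicate case is $p=1$, where the median can shift as the split crosses interior segments and a somewhat case-heavy monotonicity argument on each sub-interval is needed. The remaining bookkeeping (degenerate ties in $\m{e}$, the leftmost tie-break in lines~13--14, and the $\delta$ spacing) is routine.
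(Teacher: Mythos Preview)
Your proposal follows exactly the inductive scheme the paper uses: show that each pass of Alg.~\ref{alg:knotpred} necessarily inserts one of the original knots of $f$. The paper's own proof is in fact only a two-sentence sketch asserting that this induction ``can be easily constructed'' for $p=1,2,\infty$; your per-step analysis---in particular the clean identity $G''(\alpha)=2(d_i-c_1)^2/u+2(d_i-c_2)^2/v\ge 0$ for $p=2$ and the piecewise-constancy of the range for $p=\infty$---supplies precisely the details the paper omits.
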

\begin{proof} It is sufficient to show that the proposed method predicts the exact position of a specific knot in each iteration. The proof can be easily constructed by induction for all the three cases $p=1,2,\infty$. 
\end{proof}

\begin{corollary}
\label{theorem:foba_on_splines}
Let $\ell > 0$ and the original knot vector $\tau_{n}$ be defined with boundary conditions in Eq.~\eqref{eq:bound_eq}, and $f\in\mathcal{S}_{\ell}(\tau_{n})$. Alg.~\ref{alg:knotpred} with  $p=1,2,\infty$ can then predict a certain number of knots $m$, $m < n$, accurately from $\tau_{n}$.
\end{corollary}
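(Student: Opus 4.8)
The plan is to reduce this corollary to Lemma~\ref{theorem:foba} by exploiting the classical fact that differentiation lowers the degree of a spline while preserving its knot vector. First I would recall that if $f\in\mathcal{S}_{\ell}(\tau_n)$ then $f'\in\mathcal{S}_{\ell-1}(\tau_n)$: on each subinterval the polynomial piece loses one degree, and at each simple interior knot the smoothness drops from $\mathcal{C}^{\ell-1}$ to $\mathcal{C}^{\ell-2}$. Iterating this $\ell$ times produces $f^{(\ell)}\in\mathcal{S}_0(\tau_n)$, i.e.\ a step function whose breakpoints lie among the interior knots $t_1,\ldots,t_{n-1}$. This places us squarely in the setting already handled by Lemma~\ref{theorem:foba}.

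Next I would classify the interior knots by whether $f^{(\ell)}$ actually jumps there. At a simple knot $t_k$ the $\ell$-th derivative jumps precisely when $f$ is genuinely only $\mathcal{C}^{\ell-1}$ at $t_k$; if $f$ happens to be smoother there, the jump vanishes and $t_k$ becomes an \emph{inactive} knot that cannot be detected by a piecewise-constant fit. Let $m$ denote the number of \emph{active} knots, those at which $f^{(\ell)}$ genuinely jumps. Since only the $n-1$ interior knots can carry a jump --- the boundary knots being fixed by Eq.~\eqref{eq:bound_eq} --- we always have $m\le n-1<n$, which already yields the strict inequality asserted in the statement.

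With this reduction in place, I would apply Lemma~\ref{theorem:foba} directly to the step function $f^{(\ell)}$: running Alg.~\ref{alg:knotpred} with any of $p=1,2,\infty$ on input $f^{(\ell)}$ returns a subset of its breakpoints exactly, by the same inductive argument already used for that lemma. Because those breakpoints are by construction a subset of $\tau_n$, the algorithm predicts the $m<n$ active original knots accurately, which is the desired conclusion.

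The step I expect to be the main obstacle is the interface between the statement and the algorithm's input: Alg.~\ref{alg:knotpred} is written to receive the function $f$, whereas the clean reduction operates on its $\ell$-th derivative $f^{(\ell)}$. I would therefore make explicit that, for a spline, $f^{(\ell)}$ is available in closed form and is the natural quantity on which to run the prediction, so that the $\ell=0$ machinery of Lemma~\ref{theorem:foba} transfers verbatim. A secondary subtlety is the careful bookkeeping of active versus inactive knots, which must be phrased so as to claim exact recovery only of the genuinely active knots and thereby justify $m<n$ without over-claiming.
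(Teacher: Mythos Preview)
Your proposal is correct and follows precisely the paper's own argument: the proof there is the one-liner that $f^{(\ell)}\in\mathcal{S}_0(\tau_n)$ by Definition~\ref{def:splines}, so Lemma~\ref{theorem:foba} applies. Your additional discussion of active versus inactive knots and of the need to feed $f^{(\ell)}$ rather than $f$ to Alg.~\ref{alg:knotpred} is exactly right and anticipates the remark the paper places immediately after the corollary about ``hidden knots'' that cannot be localized.
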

\begin{proof} 
According to Definition~\ref{def:splines}, the $\ell$th derivative of $f$ is a piecewise constant function $f^{(\ell)}\in S_0(\tau_{n})$ to which Lemma~\ref{theorem:foba} applies. 
\end{proof}
Note that, if $f^{(\ell)}$ retains the same value over the interval $[{t}_q,{t}_{q+2}]$, then the ``hidden knot'' ${t}_{q+1}$ cannot be localized by Alg.~\ref{alg:knotpred}. However, it is possible to find every knot of a spline $f\in\mathcal{S}_{\ell}({\tau}_{n})$ provided that all the consecutive steps of the function $f^{(\ell)}$ are different. 
In Section~\ref{sec:experiments}, we show that, according to experiments, the proposed method can predict the optimal knots not only for splines, but also for more complex functions.

The nature of the knot-prediction algorithm depends on the norm, which should be chosen to suit the given task. Since the predictions for $p=1$ are less sensitive to outliers than those for $p=2$, the former is preferred in the case of noisy data. This is due to the coefficients of first-order B-spline approximations being defined by the medians and the means of the data  for $p=1,2$, respectively (see Eqs.~\eqref{eq:l2coeffs}-\eqref{eq:recall_lpcoeff}). In the case of uniform approximations, the largest error magnitude is to be minimized. This property is useful for detecting sudden changes in the signal, such as spikes. These phenomena can be seen in Fig.~\ref{fig:foba_examples}, where we predicted the knots of a heartbeat for a real ECG signal (blue curve). For $p=\infty$, more than half of the interior knots are inserted near the middle lobe, which is called the QRS complex and one of the most important waveforms. We also show the corresponding cubic B-spline approximations (Fig.~\ref{fig:foba_examples}, bottom graph), which are good initial approximations of the original signal. In the next section, we refine these curves by applying a few steps of the B-spline VP algorithm.      

\subsection{B-spline VP algorithm}
\label{sec:bsplinevp}


Here, we consider the discrete VP problem, where the measured data is given in the form of vectors $\m{x},\m{f}\in\IR^N$ (i.e.,\ the $i$th coordinate $f_i$ represents the observed value at time $x_i$). The corresponding Hilbert space $(\mathcal{H},\left\langle \cdot,\cdot\right\rangle)$ is therefore identical to the vector space $\IR^N$ with the usual dot product. In this case, the projectors $P_{\mathcal{S(\boldsymbol{\alpha})}}\text{ and }P^{\bot}_{\mathcal{S(\boldsymbol{\alpha})}}$in Eq.~\eqref{eq:varpro_min} are equal to the matrices $\m{P}_{\boldsymbol{\Phi}(\boldsymbol{\alpha})}=\boldsymbol{\Phi}(\boldsymbol{\alpha})\boldsymbol{\Phi}^+(\boldsymbol{\alpha})$ and $\m{P}^{\bot}_{\boldsymbol{\Phi}(\boldsymbol{\alpha})}=\m{I}-\m{P}_{\boldsymbol{\Phi}(\boldsymbol{\alpha})}$, where $\boldsymbol{\Phi}_{ik}(\boldsymbol{\alpha})=\varphi_k(\boldsymbol{\alpha};x_i)\;(1\leq i \leq N,\; 1\leq k\leq n)$ denotes the matrix consisting of the uniformly sampled base functions, and $\boldsymbol{\Phi}^+(\boldsymbol{\alpha})$ is the Moore--Penrose pseudoinverse of $\boldsymbol{\Phi}(\boldsymbol{\alpha})$. Thus, the gradient $\nabla r_2$ is based on the Fr\'echet derivative of the matrix function $\boldsymbol{\Phi}:=\boldsymbol{\Phi}(\boldsymbol{\alpha})\in\IR^{N\times n}$ with respect to the vector variable $\boldsymbol{\alpha}\in\IR^m$. Golub and Pereyra \cite{golub-pereyra} showed that this can be interpreted as a three-dimensional tensor formed by the following matrix slabs:
\begin{equation*}
			\IR^{N\times n}\ni \m{D}_j:=\m{D}_j(\boldsymbol{\alpha})=\partial \boldsymbol{\Phi}(\boldsymbol{\alpha}) / \partial \alpha_j \quad (j=1,\ldots,m)\,\textrm{.}
	\label{eq:varpro_parder}
	\end{equation*}
If $\boldsymbol{\alpha}$ represents the vector of free knots, the columns of $\boldsymbol{\Phi}$ are the uniformly sampled B-splines $B_{\ell,k}(\tau_n(\boldsymbol{\alpha}); \cdot)$ of degree $\ell$, and the partial derivatives in $\m{D}_j$ are calculated according to Eq.~\eqref{eq:derBspline}. These matrices are sparse because the functions $B_{\ell,k}$ and the corresponding partial derivatives are zero outside their support $[t_k,t_{k+\ell+1}]$. For the sake of simplicity, we omit the vector of free parameters $\boldsymbol{\alpha}$ from the notations of these matrices. The $j$th coordinate of the gradient is then given by 
	\begin{equation}
		\frac{1}{2}\nabla r_{2_j}=\left[\m{J}_{:j}\right]^T \m{P}^{\bot}_\Phi \m{f}=\left[-\left(\m{P}^{\bot}_\Phi\m{D}_j\boldsymbol{\Phi}^{+} + \left(\m{P}^{\bot}_\Phi\m{D}_j\boldsymbol{\Phi}^{+}\right)^T \right)\m{f}\right]^T \m{P}^{\bot}_\Phi \m{f}\,\textrm{,}		
	\label{eq:varpro_grad}
	\end{equation}
where $\m{J}_{:j}$ denotes the $j$th column of the Jacobian. The two terms of the Jacobian matrix can be further simplified:
\begin{align}
\label{eq:kaufman_term1}
	\m{Kf}&=\m{P}^{\bot}_\Phi\m{D}_j\boldsymbol{\Phi}^{+}\m{f}=\m{P}^{\bot}_\Phi\m{D}_j\m{c}=\m{D}_j\m{c}-\m{P}_\Phi \m{D}_j\m{c}\,\textrm{,}\\
	\m{Lf}&=\left(\m{P}^{\bot}_\Phi\m{D}_j\boldsymbol{\Phi}^{+}\right)^T\m{f}=\left(\boldsymbol{\Phi}^{+}\right)^T \m{D}^T_j \m{P}^{\bot}_\Phi \m{f}=\left(\boldsymbol{\Phi}^{+}\right)^T \m{D}^T_j \m{r}\,\textrm{.}
\label{eq:kaufman_term2}
\end{align}
If the singular value decomposition (SVD) $\boldsymbol{\Phi}=\m{U\Sigma V}^T$ is given, then $\boldsymbol{\Phi}^+=\m{V}\boldsymbol{\Sigma}^+\m{U}^T$ and $\m{P}_{\Phi}=\m{U}\m{U}^T$. Kaufman \cite{kaufman} showed that the second term $\m{Lf}$ can be ignored because the residual $\m{r}$ becomes negligibly small near the solution. Note that only the first $\rank\left(\boldsymbol{\Phi}\right)$ number of columns rather than the full matrix $\m{U}$ must be computed. Hence, CPU time can be reduced when $n \ll N$ (e.g.,\ compressing a signal consisting of $N$ samples by storing only the $n$ coefficients of its least-squares approximation). For this reason we use the economy-sized SVD decomposition in combination with sparse-matrix computations. Note that our implementation is based on the work of O'Leary and Rust \cite{varpro_matlab}. We adapted their algorithm to B-splines with free knots by utilizing the special properties mentioned above. Section~\ref{sec:compECG} presents experimental results which show that the proposed algorithm halves execution time compared to the former VP implementation.


\section{Numerical experiments}
\label{sec:experiments}

\subsection{Approximating synthetic data}
The evolution of B-spline free-knot optimization methods dates back to the 70s, when researchers used various test functions to demonstrate the efficiency of these algorithms. For instance, one of the most popular is the titanium heat dataset, which measures the properties of titanium as functions of temperature. This set of test functions has been extended by numerous authors over the past few decades. However, the performance evaluations of previous algorithms are not comparable (e.g.,\ the authors used different formulas to quantify the numerical errors of the approximations). Another issue concerns computational complexity, which can be quantified in many ways, for instance, by measuring execution time or the number of FLOPS, iterations or function evaluations. In most cases no implementational details of these algorithms were published, which makes it difficult to provide a fair comparison. To overcome these problems, we considered methods for which the proper number of function evaluations or the exact error formula was provided by the authors. 

In this experiment, we first estimated the initial knots by Alg.~\ref{alg:knotpred} using $\ell_p\;(p=1,2,\infty)$ norm solutions. We then applied a few iterations of the B-spline VP method. From the three initialization strategies, we chose that for which the VP optimization achieved the smallest approximation error (see Fig.~\ref{fig:examples}). The list of test functions and error measures can be found in Tab~\ref{tab:synt_fun}. In some cases, we used a noise signal $w$ superimposed on the original data, which was simulated by uniformly distributed random numbers within a specific interval. The performance of the proposed method was compared to various optimization strategies, such as the Levenberg--Marquardt method \cite{lindstrom}, the Lasso algorithm \cite{lasso}, global search techniques \cite{uyarulker}, and genetic algorithms (GA) \cite{yoshimoto}. Tab.~\ref{tab:synt_test} summarizes the results, where \textsl{Nit} and \textsl{Nfe} denote the number of iterations and the number of function evaluations during optimization, respectively, and $n+1$ is the number of knots (i.e.\ $n-1$ free knots plus $2$ boundary knots). 

We conclude that our approach provides a good alternative to the other algorithms mentioned in this study. Our method either outperformed competing methods or required fewer iterations to find a stationary point close to the optimal fitness value. Although we applied a gradient-based local search method initialized by Alg.~\ref{alg:knotpred}, the corresponding estimations of the knots can also be used in global search techniques. For instance, the initial population of GAs can be formed by individuals which are based on knots predicted in different $\ell_p$ norms $(p=1,2,\infty)$. The experiment using synthetic data also shows that our algorithm is able to deal with coalescent knots, discontinuous functions, cusps and noise. 

\begin{figure}[htb!]
\centering
  \subfigure[$f_1(x)+w(x)$]{
  \includegraphics[scale=0.335, trim=140 265 154 270, clip]{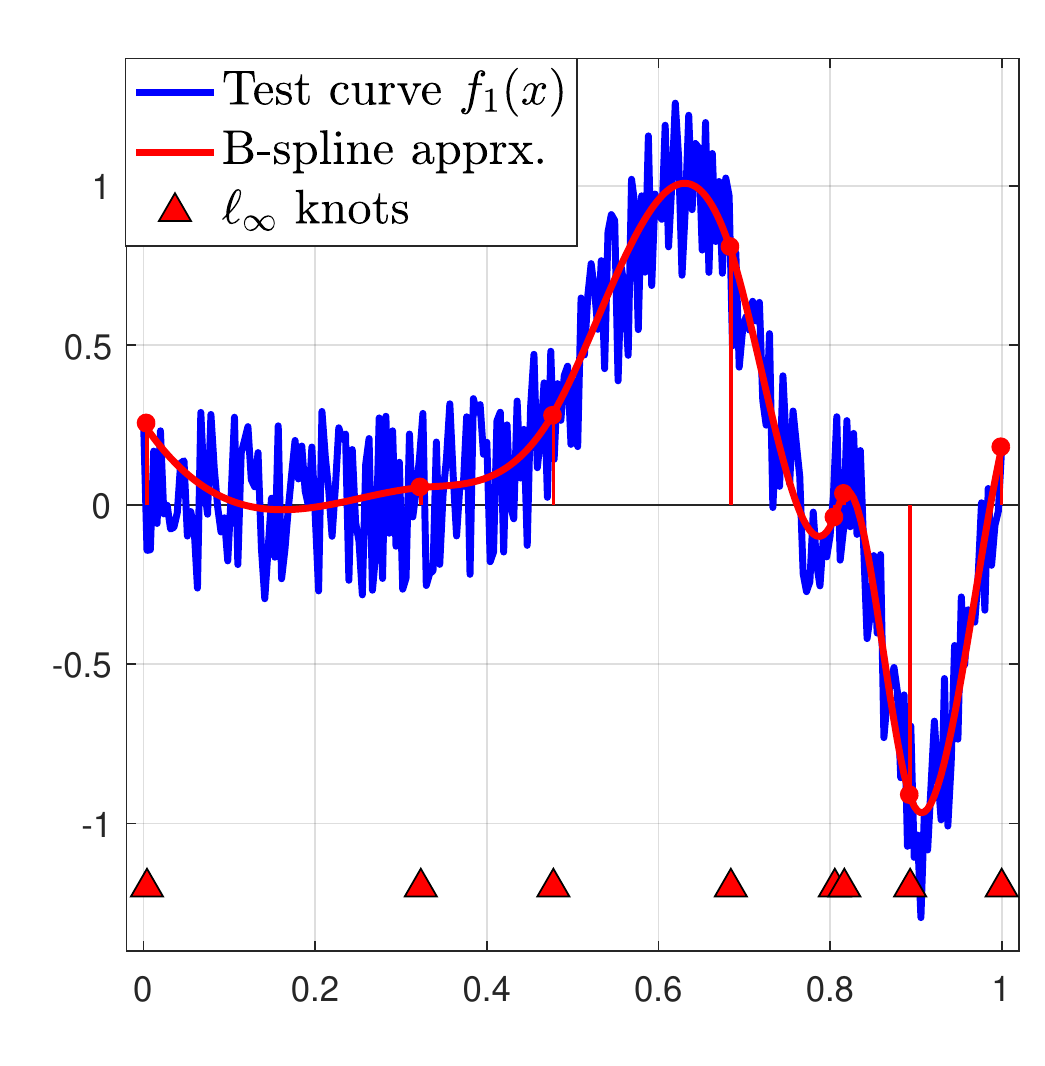}
  \label{fig:f1}
  }\hspace{2mm}
  \subfigure[Titanium heat data $f_2(x)$]{
  \includegraphics[scale=0.335, trim=140 265 154 270, clip]{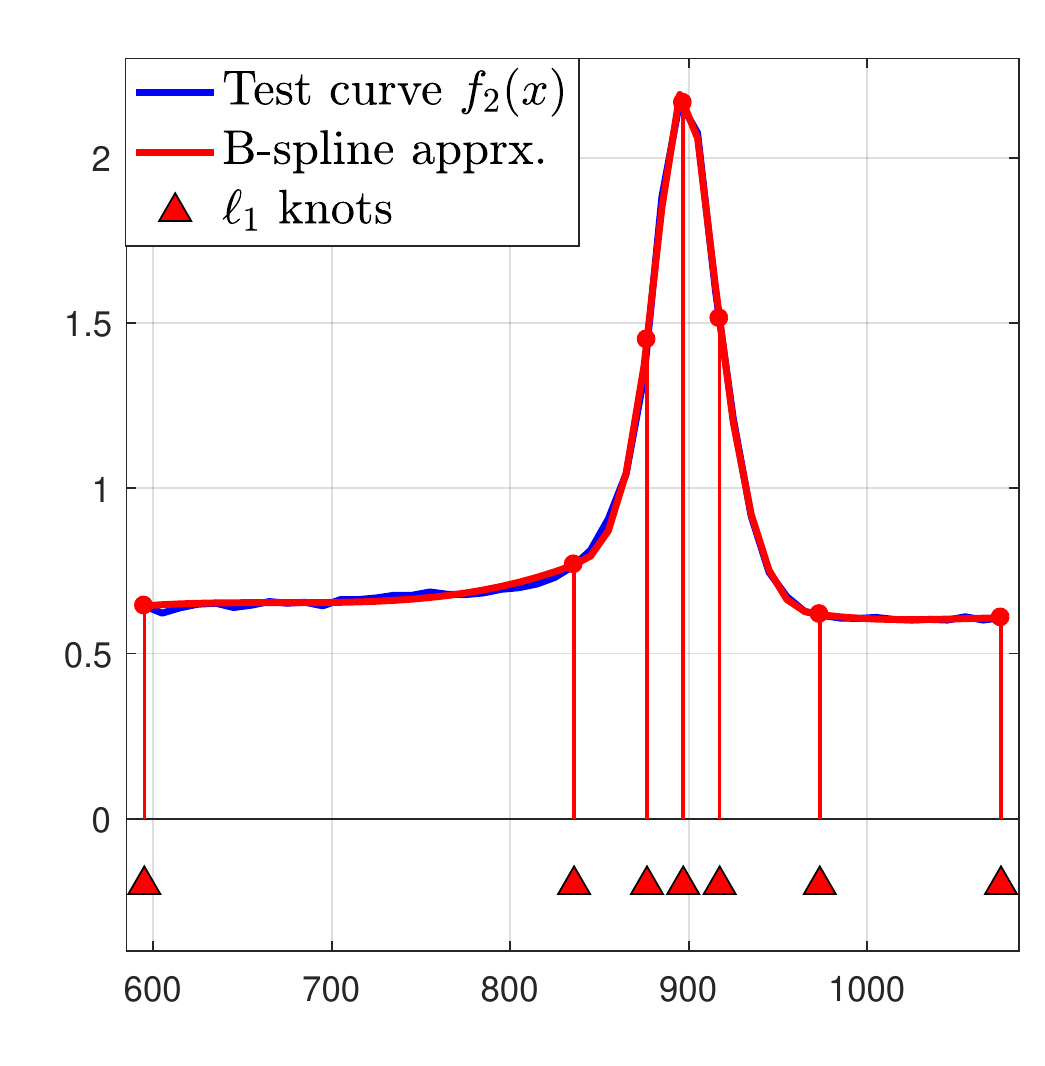}
  \label{fig:f2}
  }\hspace{2mm}
  \subfigure[$f_3(x)$]{
  \includegraphics[scale=0.335, trim=140 265 154 270, clip]{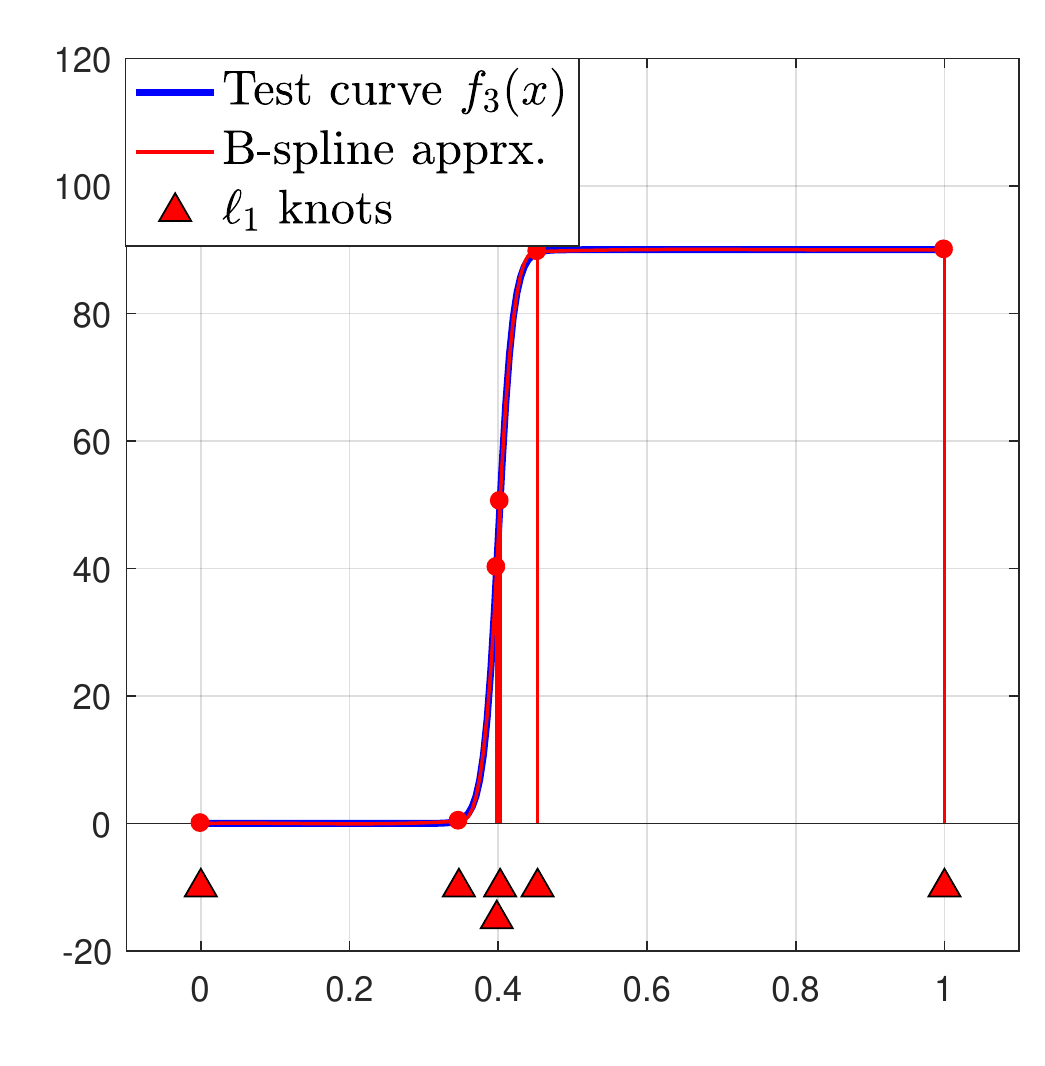}
  \label{fig:f3}
  } 	
	
  \subfigure[$f_4(x) + w(x)$]{
  \includegraphics[scale=0.335, trim=140 265 154 270, clip]{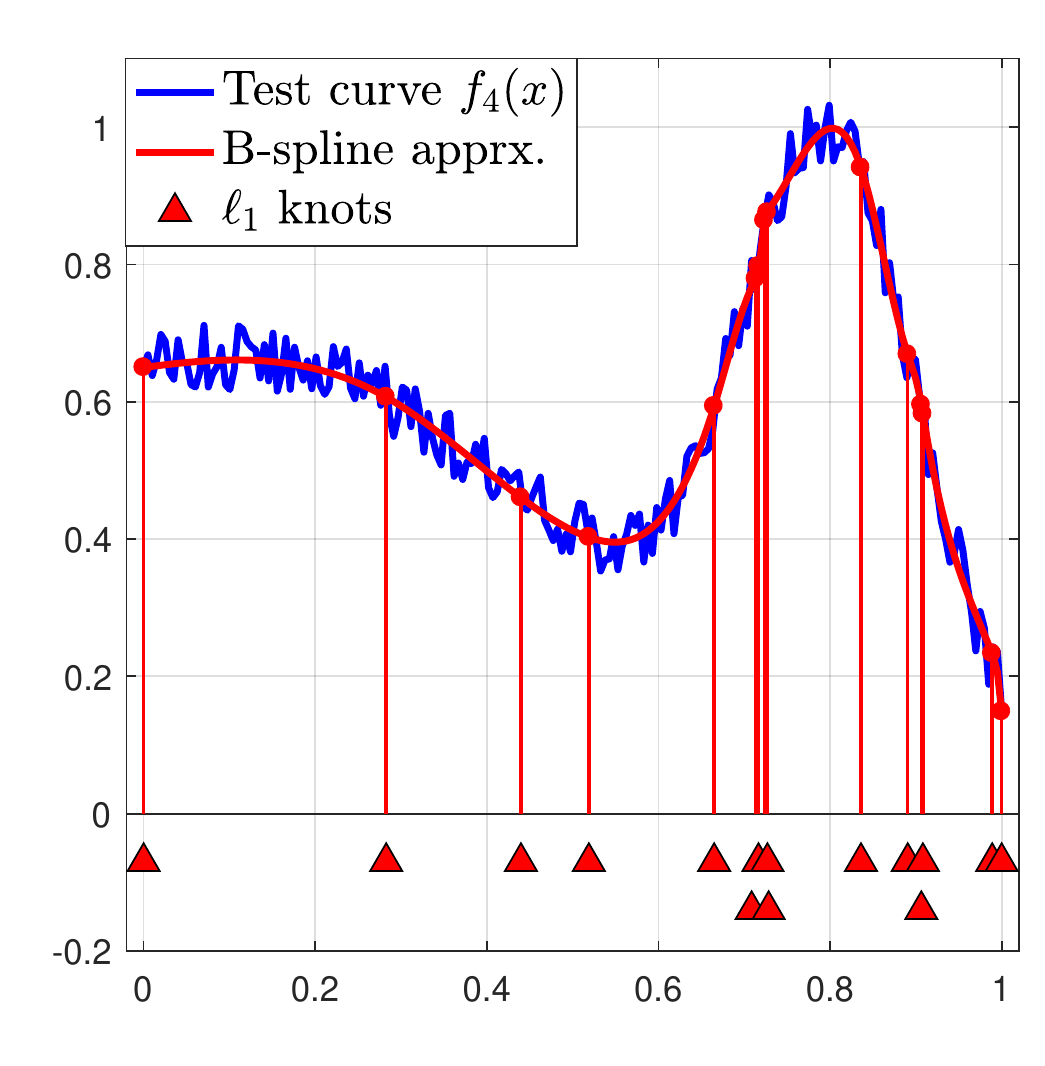}
  \label{fig:f4}
  }\hspace{2mm}
  \subfigure[$f_5(x)$]{
  \includegraphics[scale=0.335, trim=140 265 154 270, clip]{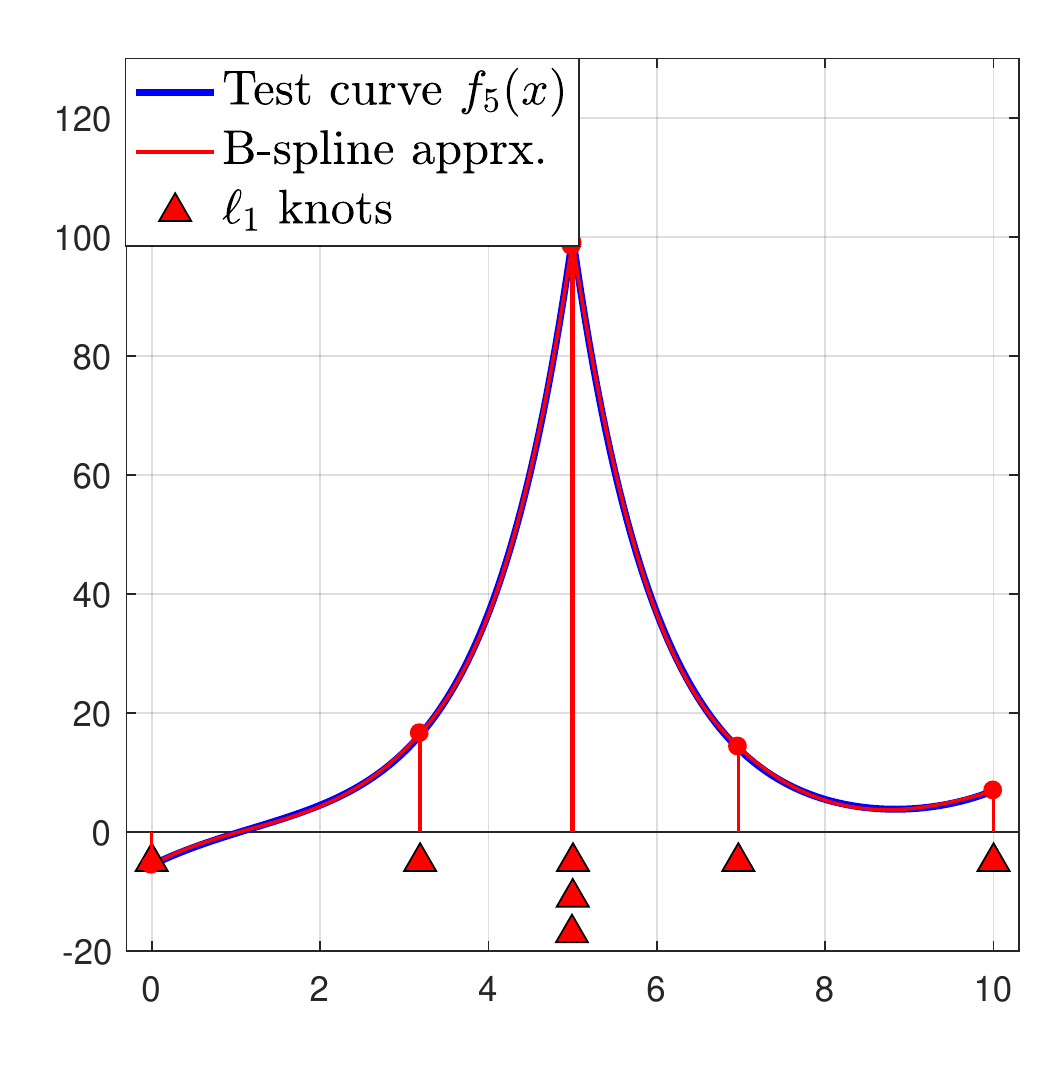}
  \label{fig:f5}
  }\hspace{2mm}
  \subfigure[$f_6(x)$]{
  \includegraphics[scale=0.335, trim=140 265 154 270, clip]{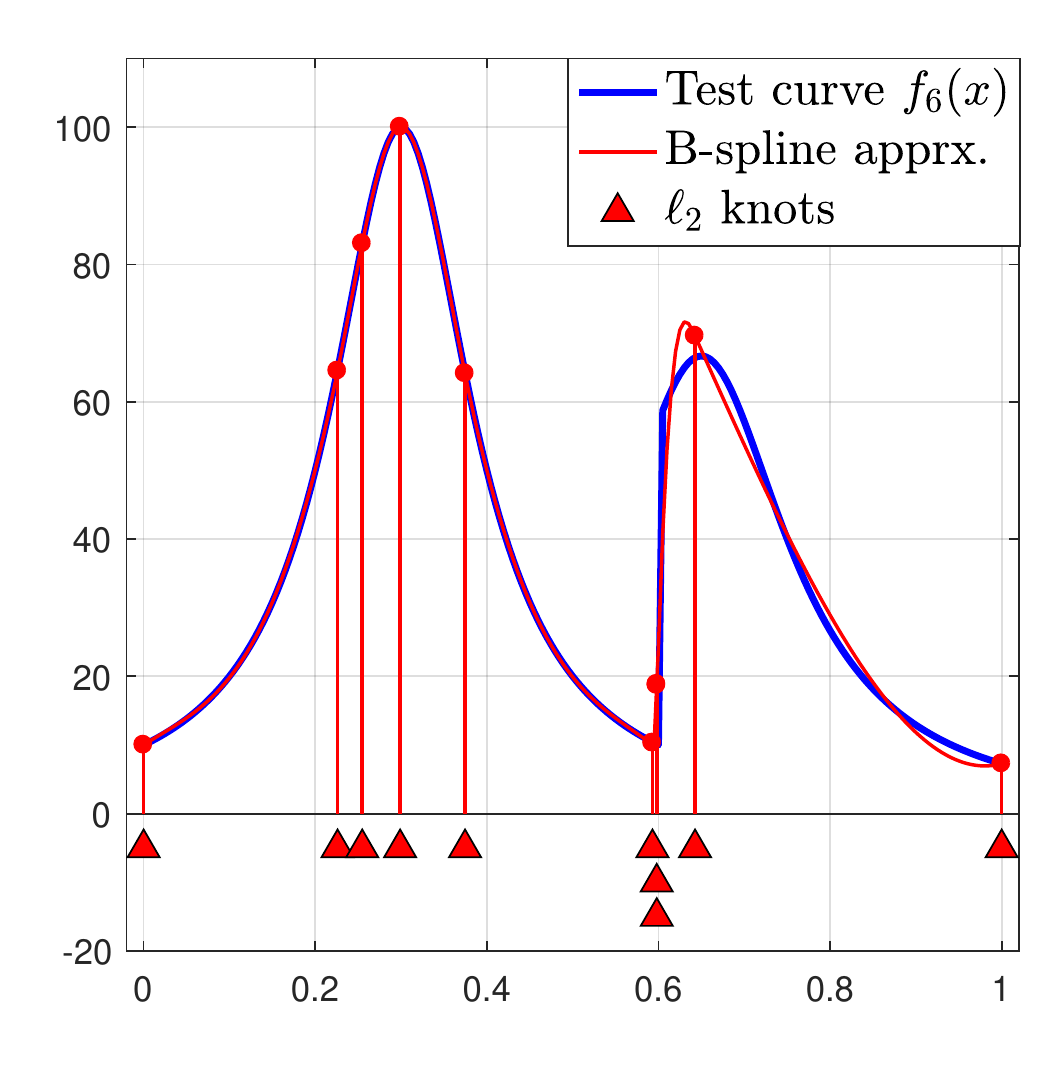}
  \label{fig:f6}
  }	
\caption{Test functions and their cubic B-spline approximations.}
\label{fig:examples}
\end{figure}

\begin{table}[htb!]
\begin{center}
\caption{Definitions of the synthetic data and the error measures.}
\scalebox{0.70} { { \renewcommand{\arraystretch}{1.2} 
  \begin{tabular}{|l l|}
		\hline
			\multicolumn{2}{|c|}{\textbf{Test functions}}\\\hline\hline
	$f_1(x)=(\sin 2\pi x^3)^3$ & $(x\in[0,1])$\\\hline
	$f_2(x)=$  Titanium heat data & \\\hline
	$f_3(x)=90/(1+e^{-100\cdot(x-0.4)})$ & $(x\in[0,1])$\\\hline
	$f_4(x)=\frac{1}{2.3935} \left(1.5\cdot e^{-\frac{(t-0.1)^2}{0.3}}+0.1 \cdot e^{-\frac{(t-0.5)^2}{2}}+2\cdot e^{-\frac{(t-0.8)^2}{0.02}}\right)$ & $(x\in[0,1])$\\\hline
	$f_5(x)=100/e^{\left|x-5\right|}+(x-5)^5/500$ & $(x\in[0,10])$\\\hline
	$f_6(x)= 
	    \begin{cases}
				1/(0.01+(x-0.3)^2)\text{,} & x< 0.6 \\
				1/(0.015+(x-0.65)^2)\text{,} & x\geq 0.6 
			\end{cases}$ 
			& $(x\in[0,1])$\\\hline\hline	
	\multicolumn{2}{|c|}{\textbf{Error measures}}\\\hline\hline			
	\multicolumn{2}{|l|}{$\text{BRE}:=\left(\frac{1}{N-1} \sum_{1}^{N} v_i (\m{f}_i-\m{\widetilde{f}}_i)^2\right)^{1/2} \qquad (v_1=v_N=\frac{1}{2}$ and $v_i=1$ otherwise)}\bigstrut\\\hline	
	\multicolumn{2}{|l|}{$\text{MSE}:=\frac{1}{N}\cdot \text{RSS}:=\frac{1}{N}\cdot \|\m{f}-\m{\widetilde{f}}\|_2^2$}\bigstrut\\\hline
	\multicolumn{2}{|l|}{$\text{BIC}~:=N\cdot\ln{\|\m{f}-\m{\widetilde{f}}\|_2^2} + \ln{\big(N\cdot(2(n-1)+\ell+1)\big)}$}\bigstrut\\\hline
  \end{tabular}
}}
\label{tab:synt_fun}
\end{center}
\end{table}

\begin{table}[!htb]
\begin{center}
\caption{Evaluation of the performance on synthetic data.}
\scalebox{0.6} { { \renewcommand{\arraystretch}{1.2} 
  \begin{tabular}{| l l r |c | c | c || c | c | c || c | c | c | c |}
		\hline
			\multicolumn{6}{|c||}{\textbf{Data}} & \multicolumn{3}{|c||}{\textbf{Other methods}} & \multicolumn{4}{|c|}{\textbf{Proposed method}} \\ \hline
			\multicolumn{3}{|c||}{\textbf{Signal}} & \textbf{Noise} ($\m{w}$) & \textbf{N} & \textbf{n+1} & \textbf{Nit} & \textbf{Nfe} & \textbf{Error} & \textbf{Nit} & \textbf{Nfe} & \textbf{Error} & \textbf{Mod} \\ 
		\hline\hline
		\multicolumn{13}{|l|}{Tests measuring residual sum of squares (RSS)} \\ \hline
			$f_1(x)\;\;$ & Fig.~\ref{fig:f1} &\cite{lindstrom} & $[-0.3,0.3]$ & 256 & 8 & 26 & 63 & 19.5 & 4 & 5 & \textbf{7.9950} & $\ell_\infty$ \\ \hline
			$f_2(x)\;\;$ & &\cite{lindstrom} & -- & 49 & 9 & 156 & 347 & \textbf{0.00138} & 20 & 21 &0.00209& $\ell_\infty$ \\ \hline\hline
			\multicolumn{13}{|l|}{Tests measuring (MSE)}\\ \hline
			$f_3(x)\;\;$ & &\cite{uyarulker} &  --   & 101 & 15 & 40 & $40<$ & 0.5140 & 4 & 5 & \textbf{0.00019} & $\ell_2$ \\ \hline
			$f_4(x)\;\;$ & Fig.~\ref{fig:f4} &\cite{uyarulker} & $[-0.05,0.05]$  & 200 & 15 & 5 & $5<$ & 0.00718 & 4 &5 &\textbf{0.00082}& $\ell_1$ \\ \hline \hline
			\multicolumn{13}{|l|}{Tests measuring de Boor and Rice error (BRE)}\\ \hline\hline
			$f_2(x)\;\;$ & Fig.~\ref{fig:f2} &\cite{jupp}& -- & 49 & 7 & 8 & $9$ & \textbf{0.01227} & 5 &6 & 0.01325 & $\ell_1$ \\ \hline
			$f_2(x)\;\;$ & &\cite{uyarulker} & -- & 49 & 7 & 5 & $5<$ & \textbf{0.00942} & 5 &6 & 0.01325 & $\ell_1$ \\ \hline
			$f_2(x)\;\;$ & &\cite{yuan} & -- & 49 & 8 & unk. & unk. & 0.01174 & 6 &7 & \textbf{0.00874} & $\ell_\infty$ \\ \hline
			\multicolumn{13}{|l|}{Tests measuring Bayes Information Criterion (BIC)}\\ \hline
			$f_3(x)\;\;$ & Fig.~\ref{fig:f3} &\cite{yoshimoto}  & --  & 201 & 6 & 30 & 1500 & 1189 & 7 & 8 &\textbf{332}& $\ell_1$ \\ \hline
			$f_5(x)\;\;$ & Fig.~\ref{fig:f5} &\cite{yoshimoto} & -- & 201 & 7 & 30 & 1500 & 1188 &14 & 15  & \textbf{471} & $\ell_\infty$ \\ \hline			
			$f_6(x)\;\;$ & Fig.~\ref{fig:f6} &\cite{yoshimoto} & --  & 201 & 10 & 129 & 6450 & \textbf{1181} & 19 & 20 &1491& $\ell_2$ \\ \hline
  \end{tabular}
}}
\label{tab:synt_test}
\end{center}
\end{table}

\subsection{Compressing real-world ECG data}
\label{sec:compECG}

We demonstrate the efficiency of our method in the context of ECG compression, where a vast amount of data is to be processed within reasonable time. In this case, the original signal is represented by $N$ samples, but only a small portion of the data (i.e.\ the coefficients and the knots) is stored. ECGs are quasi-periodic signals in which each period captures the contraction and relaxation pattern of the heart muscles. The recordings include several channels which measure the potential difference between the electrodes placed on the human body. The PhysioNet MIT-BIH Arrhythmia Database \cite{PhysioNet} is a standard dataset in this field, prepared for signal processing purposes. It contains 48 half-hour-long two-channel recordings sampled at $360$ Hz. We compressed an 11-hour portion of the dataset called DS1, as recommended by de Chazal et al.~\cite{ds1} for designing and testing heartbeat classification algorithms. 
In DS1, there are 22 recordings -- including both normal ($45868$) and abnormal ($5152$) heartbeats -- from various patients. This allowed us to examine also how inter- and intra-patient variability affects knot-prediction, approximation error, optimization and compression rate.

The performance of compression algorithms was evaluated by means of the reconstruction error and the compression ratio (CR). The former measures the numerical error of the approximation, and the latter quantifies the reduction in size of the original data. For a discrete time signal $\m{f}\in\IR^N$ with $N$ samples, the reconstruction error and CR can be defined as follows:
\begin{equation}
	\epsilon_p=\frac{\|\m{f}-\m{\widetilde{f}}\|_p}{\|\m{f}-\m{\overline{f}}\|_p}\times 100\,\textrm{,}\qquad \CR= \frac{\text{Size of the uncompressed data}}{\text{Size of the compressed data}}=\frac{N}{M}\,\textrm{,}
\label{eq:prd_cr_def}
\end{equation}
where $\m{\widetilde{f}}$ denotes the approximation and $\m{\overline{f}}$ is the mean of the original signal. The quantity $\epsilon_p$ is a kind of normalized relative error of the approximation. For $p=2$, the normalized percent root-mean-square difference (PRDN) can be calculated, which is a standardized measure of the reconstruction error in ECG signal processing. In order to measure the CR, we need to know the size of the compressed data. For B-splines of degree $\ell$, it is equal to $M=2n+\ell+1$ since not only the positions of the interior knots $\boldsymbol{\alpha}\in\IR^{n-1}$, but also the two boundary knots and the coefficient vector $\m{c}\in\IR^{n+\ell}$ must be stored. 

Since the ECG recordings in \cite{PhysioNet} were annotated manually by two or more cardiologists, we could segment the signals into heartbeats. The optimal vector of free knots for each heartbeat was estimated in the following way:
\begin{enumerate}
	\item Predict the knots by Alg.~\ref{alg:knotpred};
	\item refine the prediction by the B-spline VP method;
	\item evaluate PRDN and CR in the final iteration of the VP method.
\end{enumerate} 

Before we could process the ECG signals, we had to determine two parameters of the compression algorithm: the number of free knots and the number of iterations in the VP optimization. To this end, we took the first two minutes of the recordings in DS1 (this allowed us to portion $22\times 2$ minutes of the data). Using Alg.~\ref{alg:knotpred}, we then predicted the positions of $50$ knots for each heartbeat. Fig.~\ref{fig:numofknots_foba} shows the average of the error $\epsilon_p\; (p=1,2,\infty)$ for each additional knot. It can be seen that using more than $25$ knots does not reduce the error significantly. Note that, although this is just a first-order B-spline approximation, it gives a good estimate of the number of free knots.
Further, the FOBA calculations are very simple due to the explicit form of the coefficients (see Eqs.~\eqref{eq:l2coeffs}-\eqref{eq:recall_lpcoeff}). The average execution times of the FOBA methods were only $55,\, 3$ and $1.5$ minutes for $p=1,2,\infty$, respectively. In particular, the $\ell_2$ and $\ell_\infty$ variants of Alg.~\ref{alg:knotpred} were sufficient to estimate the number of free knots within reasonable time.

\begin{figure}[!htb]
\centering
  \subfigure[B-spline approximations of an ECG.]{
  \includegraphics[scale=0.49, trim=130 265 120 265, clip]{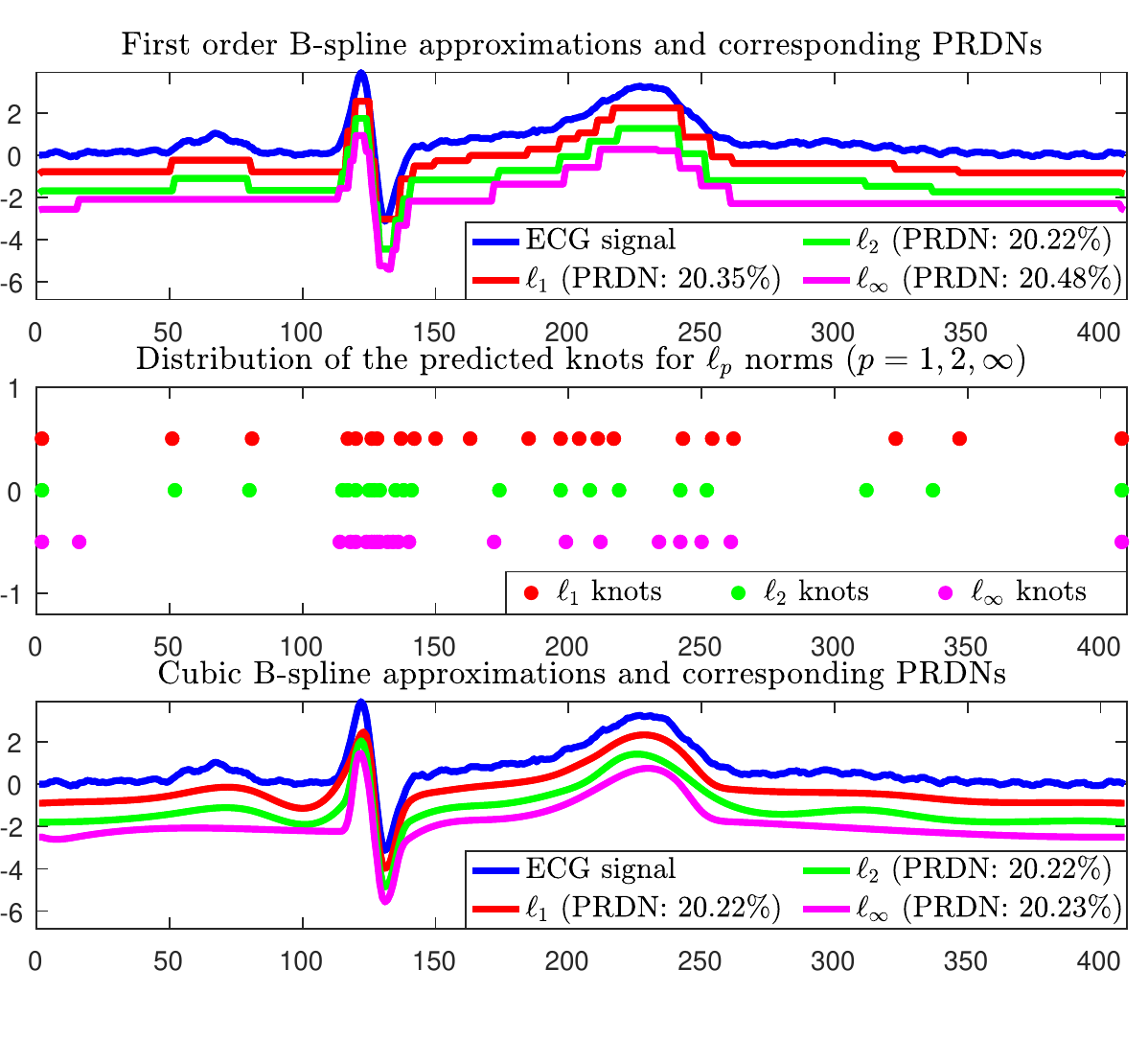}
  \label{fig:foba_examples}
  }\hspace{2mm}
	\subfigure[Estimation of the number of free knots.]{
  \includegraphics[scale=0.50, trim=145 265 150 270, clip]{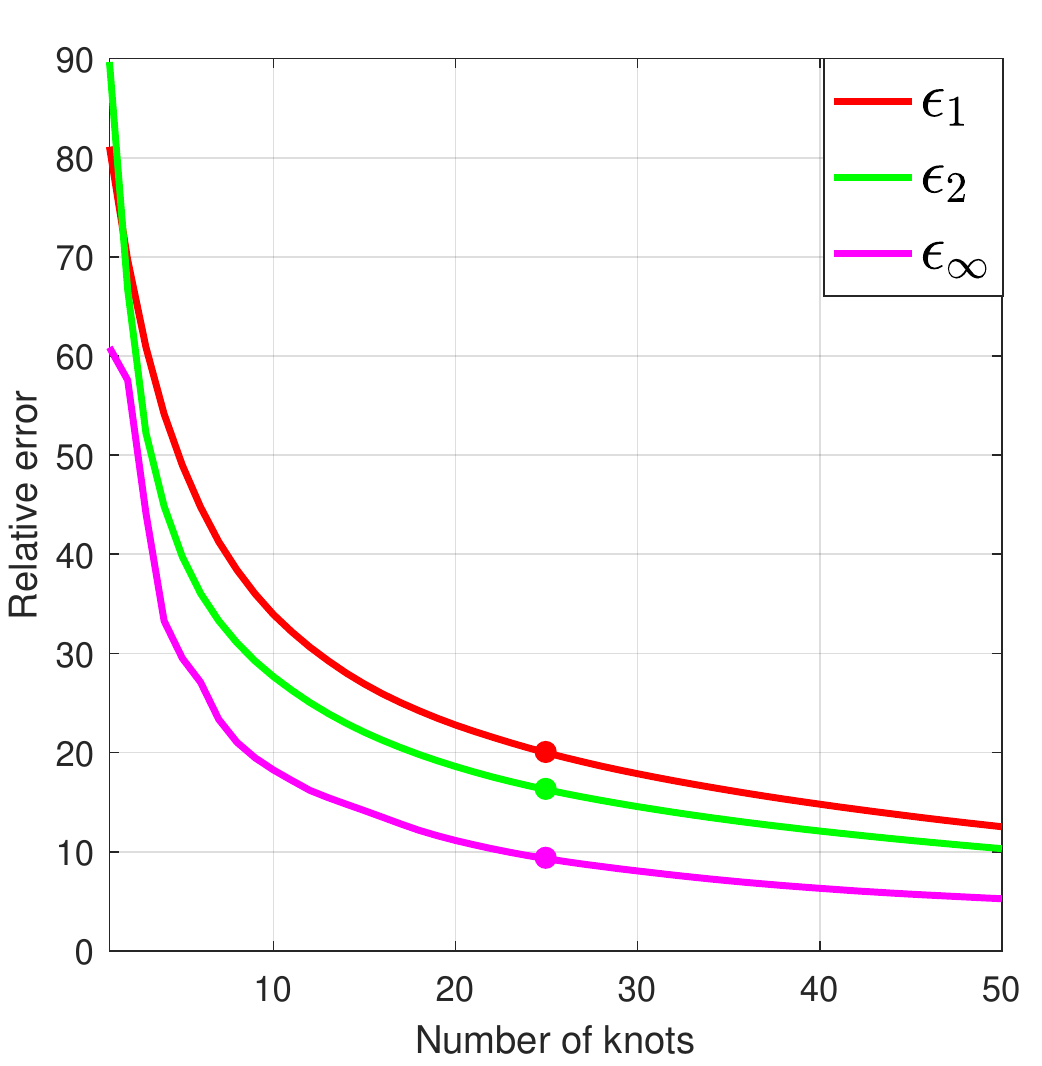}
  \label{fig:numofknots_foba}
  } 
\caption{Properties of the knot-prediction algorithm.}
\label{fig:examples_exectimes}
\end{figure}

We also analyzed the speed of convergence of the B-spline VP method on dataset DS1. Let us suppose that the sequence of vectors consisting of free knots $\boldsymbol{\alpha}^{(k)}$ converges to $\boldsymbol{\alpha}^{*}$; the usual definition of the asymptotic rate $\mu$ and the order $\rho$ of convergence then have the form:
\begin{equation*}
	\lim_{k\rightarrow\infty} \frac{\left\|\boldsymbol{\alpha}^{(k+1)}-\boldsymbol{\alpha}^*\right\|}{\left\|\boldsymbol{\alpha}^{(k)}-\boldsymbol{\alpha}^{*}\right\|^{\rho}} \approx \lim_{k\rightarrow\infty} \frac{\left\|\boldsymbol{\alpha}^{(k+1)}-\boldsymbol{\alpha}^{(k)}\right\|}{\left\|\boldsymbol{\alpha}^{(k)}-\boldsymbol{\alpha}^{(k-1)}\right\|^{\rho}}=\lim_{k\rightarrow\infty} \frac{\left\|\boldsymbol{\varepsilon}^{(k+1)}\right\|}{\left\|\boldsymbol{\varepsilon}^{(k)}\right\|^{\rho}}=\mu\,\textrm{.}
\label{eq:rate_order}
\end{equation*}
Assuming that we are close to the limit, the logarithm of the right-hand side can be written as
\begin{equation*}
	\log\|\boldsymbol{\varepsilon}^{(k+1)}\|\approx \rho\cdot \log\|\boldsymbol{\varepsilon}^{(k)}\| + \log \mu\,\textrm{.}
\end{equation*}
Considering this form, we estimated the parameters $\rho\text{ and }\mu$ by applying linear regression to the data $(\log\|\boldsymbol{\varepsilon}^{(k)}\|,\log\|\boldsymbol{\varepsilon}^{(k+1)}\|)$ for $k=1,\ldots,100$. The slope and the $y$-intercept of the fitted polynomial give $\rho$ and $\log \mu$, respectively. According to the overall average of the results in Tab.~\ref{tab:convtest_avg}, the order $\rho$ is close to linear, with the rate of convergence $\mu \approx 0.75$. It also turned out that ignoring the second term $\m{L}$ in the Jacobian (see Eqs.~\eqref{eq:kaufman_term1}-\eqref{eq:kaufman_term2}) did not change the final results significantly. In fact, the differences in overall average PRDN and required number of iterations between the two variants of the B-spline VP method in Tab.~\ref{tab:convtest_avg} are negligible. For this reason, we computed the full Jacobian matrix in our further experiments.

\begin{table}[!tb]
\centering
	\caption{Average results of the convergence test for $100$ B-spline VP iterations.}
	\scalebox{0.68}{
\begin{tabular}{|l||cc||cccc||cccc|}
\multicolumn{1}{c}{\textbf{}} & \multicolumn{2}{c}{\textbf{Initial Jacobian}} & \multicolumn{4}{c}{\textbf{VP with full Jacobian}} & \multicolumn{4}{c}{\textbf{VP with partial Jacobian}} \bigstrut\\\hline
\textbf{Mod} & $\left\|\m{K}\right\|_2/\left\|\m{J}\right\|_2$ & $\left\|\m{L}\right\|_2/\left\|\m{J}\right\|_2$ & $\boldsymbol{\mu}$ & $\boldsymbol{\rho}$ & \textbf{Nit} & \textbf{PRDN} ($\%$) & $\boldsymbol{\mu}$ & $\boldsymbol{\rho}$ & \textbf{Nit} & \textbf{PRDN} ($\%$) \bigstrut[t]\\\hline
$\ell_1$ & 0.85 & 0.57 & 0.77 & 0.87 & 3.24 & 5.95 & 0.64 & 0.75 & 2.97 & 6.11 \\
$\ell_2$  & 0.86 & 0.56 & 0.75 & 0.88 & 2.98 & 5.99 & 0.62 & 0.76 & 2.72 & 6.28 \\
$\ell_\infty$ & 0.86 & 0.55 & 0.74 & 0.89 & 2.87 & 6.49 & 0.60 & 0.77 & 2.58 & 6.89\\
\hline
\end{tabular}
}
\label{tab:convtest_avg}
\end{table}

Next, we had to find the proper number of iterations for the B-spline VP optimization. To this end, we used a residual-based termination criterion:
\begin{equation*}
\left\|\left(\m{P}_{\boldsymbol{\Phi}(\boldsymbol{\alpha}^{(k+1)})}-\m{P}_{\boldsymbol{\Phi}(\boldsymbol{\alpha}^{(k)})}\right)\m{f}\right\|_2<10^{-1}\,\textrm{,}
\end{equation*}
where $\boldsymbol{\alpha}^{(k)}\in\IR^{n-1}$ denotes the vector of free knots at the $k$th iteration, and $\m{P}_{\boldsymbol{\Phi}(\boldsymbol{\alpha}^{(k)})}$ is the corresponding cubic B-spline projector matrix. In Tab.~\ref{tab:convtest_avg}, the column \textsl{Nit} lists the numbers of iterations required to satisfy this condition. Based on these results, we stopped the cubic B-spline VP optimization after $4$ iterations. 

Using $n+1=25$ knots and $4$ VP iterations, we repeated the tests on the whole DS1 dataset and compared the performance with that of other algorithms. In all experiments, we measured the execution time on a computer equipped with Intel(R) Core(TM) i7-6700 @ 3.40GHz CPU. Before we analyze the results, let us list the methods we compared and define the corresponding abbreviations in Tab.~\ref{tab:aprx}: 
\begin{itemize}
	\item \textbf{CR}: According to Eq.~\eqref{eq:prd_cr_def}, the compression ratio (CR) for the whole recording is $N/M$, where $M=\text{number of heartbeats} \times (2\cdot 24 + 3 + 1)$ and $N$ denotes the overall number of samples.
	\item \textbf{FOBA PRDN}: We predicted the knots by FOBA, and we computed the PRDN of the corresponding cubic B-spline approximation.
	\item \textbf{VP PRDN}: The estimated knots were refined by applying $4$ B-spline VP iterations. In the final step, we evaluated the PRDN of the cubic B-spline approximation, which represents the reconstructed signal.
	\item \textbf{KR}: This is a \textsl{Knot-Reduction} procedure \cite{ekg} in which the knot vector is initialized with all the sample points. In each iteration, the number of knots is reduced by removing the knot whose absence increases the mean squared error the least. 
	\item \textbf{UVP}: This denotes the same B-spline VP algorithm as used in our approach, but initialized with uniformly distributed knots. 
	\item \textbf{RVP}: This is a B-spline VP algorithm with $4$ iterations initialized with randomly distributed knots.
\end{itemize}

\begin{table}[!htb]
\centering
	\caption{Performance of the algorithms tested. The best PRDNs of each group of columns are shown in bold face, and the lowest PRDN of each row is underlined.
}
	\scalebox{0.58}{
\begin{tabular}{|c|c||rrr||rrr|rrr|ccccc|}
\multicolumn{1}{c}{\textbf{}} & \multicolumn{1}{c}{\textbf{}} & \multicolumn{3}{c}{\textbf{FOBA PRDN ($\%$)}} & \multicolumn{3}{c}{\textbf{VP PRDN ($\%$)}} & \multicolumn{3}{c}{\textbf{Other PRDN ($\%$)}} & \multicolumn{5}{c}{\textbf{Execution time (min)}} \bigstrut\\\hline
\textbf{Rec.} & \textbf{CR} & $\boldsymbol{\ell_1}$ & $\boldsymbol{\ell_2}$ & $\boldsymbol{\ell_\infty}$ & $\boldsymbol{\ell_1}$ & $\boldsymbol{\ell_2}$ & $\boldsymbol{\ell_\infty}$ & \textbf{KR} & \textbf{UVP} & \textbf{RVP} & $\boldsymbol{\ell_1}$ & $\boldsymbol{\ell_2}$ & $\boldsymbol{\ell_\infty}$ & \textbf{VP} & \textbf{KR}  \bigstrut[t]\\\hline
101 & 6.72 & 11.89 & 10.12 & \textbf{10.03} & 6.62 & \textbf{6.57} & 6.85 & \underline{5.38} & 20.44 & 30.43 & 2.59 & 0.13 & 0.05 & 4.11 & 67.62 \\
106 & 6.18 & 9.78 & 8.71 & \textbf{8.64} & 6.43 & \textbf{6.08} & 6.23 & \underline{4.91} & 13.59 & 20.84 & 2.67 & 0.14 & 0.06 & 4.17 & 68.18 \\
108 & 7.11 & 19.63 & 18.71 & \textbf{17.44} & 15.17 & 14.42 & \textbf{14.06} & \underline{12.32} & 18.88 & 24.88 & 2.75 & 0.14 & 0.05 & 3.81 & 73.75 \\
109 & 4.96 & 8.37 & 8.03 & \textbf{7.59} & \textbf{4.29} & 5.38 & 5.23 & \underline{2.48} & 4.29 & 5.65 & 2.63 & 0.13 & 0.07 & 4.96 & 46.58 \\
112 & 4.94 & 9.65 & \textbf{9.64} & 9.65 & \textbf{7.00} & 7.06 & 7.47 & \underline{5.92} & 8.26 & 15.74 & 2.72 & 0.14 & 0.07 & 5.02 & 46.08 \\
114 & 6.67 & 17.11 & \textbf{15.70} & 15.98 & 14.10 & \textbf{13.83} & 13.91 & \underline{12.35} & 40.60 & 40.38 & 2.59 & 0.13 & 0.06 & 3.96 & 68.90 \\
115 & 6.42 & 10.79 & \textbf{8.61} & 8.81 & \textbf{4.91} & 5.30 & 6.37 & \underline{3.95} & 30.24 & 40.91 & 2.60 & 0.13 & 0.06 & 4.05 & 63.86 \\
116 & 5.20 & 8.25 & \textbf{8.14} & 10.00 & 4.99 & \textbf{4.93} & 6.27 & \underline{4.22} & 13.52 & 21.88 & 2.57 & 0.13 & 0.06 & 4.71 & 48.99 \\
118 & 5.51 & 12.13 & 11.55 & \textbf{10.82} & 8.63 & 8.21 & \textbf{8.16} & \underline{6.24} & 16.51 & 19.54 & 2.65 & 0.13 & 0.06 & 4.64 & 52.88 \\
119 & 6.31 & 14.50 & 9.33 & \textbf{7.57} & 7.97 & \textbf{4.77} & 5.00 & \underline{3.57} & 15.53 & 22.64 & 2.56 & 0.13 & 0.05 & 4.10 & 69.36 \\
122 & 5.07 & 7.53 & \textbf{7.37} & 9.22 & 5.34 & \textbf{5.19} & 5.47 & \underline{4.05} & 9.12 & 12.84 & 2.52 & 0.13 & 0.07 & 4.85 & 47.05 \\
124 & 7.74 & \textbf{8.29} & 9.23 & 11.81 & 4.92 & \textbf{4.67} & 7.18 & \underline{3.46} & 14.22 & 20.17 & 2.66 & 0.14 & 0.04 & 3.49 & 80.58 \\
201 & 6.38 & 8.28 & \textbf{8.09} & 8.65 & 5.79 & \textbf{5.75} & 6.33 & \underline{4.71} & 11.28 & 17.98 & 2.66 & 0.14 & 0.05 & 4.03 & 80.41 \\
203 & 4.21 & 9.90 & \textbf{9.74} & 9.84 & 7.72 & \textbf{7.62} & 7.73 & \underline{6.06} & 8.54 & 10.79 & 2.66 & 0.17 & 0.08 & 5.69 & 43.12 \\
205 & 4.72 & 9.80 & 9.68 & \textbf{9.42} & 6.56 & \textbf{6.44} & 6.64 & \underline{5.44} & 14.84 & 25.68 & 2.86 & 0.14 & 0.07 & 5.20 & 43.83 \\
207 & 6.50 & 8.08 & 7.71 & \textbf{7.28} & 5.77 & 5.58 & \textbf{5.57} & \underline{4.89} & 7.62 & 9.77 & 2.39 & 0.13 & 0.05 & 3.45 & 151.51 \\
208 & 4.25 & 7.62 & 7.06 & \textbf{7.03} & 5.15 & \textbf{4.95} & 5.18 & \underline{4.18} & 6.61 & 11.86 & 2.66 & 0.15 & 0.08 & 5.70 & 40.88 \\
209 & 4.18 & 14.06 & 11.69 & \textbf{10.70} & 8.86 & \textbf{8.17} & 8.39 & \underline{6.46} & 11.03 & 22.70 & 2.71 & 0.15 & 0.08 & 5.74 & 37.91 \\
215 & 3.74 & 12.61 & 12.30 & \textbf{11.59} & \textbf{8.10} & 8.24 & 8.45 & \underline{6.22} & 8.09 & 13.93 & 3.04 & 0.17 & 0.09 & 6.41 & 35.05 \\
220 & 6.13 & 10.71 & \textbf{8.37} & 8.78 & \textbf{5.47} & 5.47 & 6.30 & \underline{4.20} & 24.18 & 38.56 & 2.60 & 0.12 & 0.06 & 4.15 & 61.84 \\
223 & 4.82 & \textbf{5.54} & 5.78 & 5.63 & \textbf{3.52} & 3.62 & 3.89 & \underline{2.85} & 5.58 & 10.86 & 2.66 & 0.14 & 0.07 & 5.09 & 46.31 \\
230 & 5.56 & 8.56 & \textbf{8.50} & 10.57 & \textbf{4.88} & 5.33 & 7.63 & \underline{3.81} & 11.60 & 23.54 & 2.58 & 0.13 & 0.06 & 4.47 & 55.29 \\\hline
\textbf{Avg.} & 5.61 & 10.60 & \textbf{9.73} & 9.87 & 6.92 & \textbf{6.71} & 7.20 & \underline{5.35} & 14.30 & 20.98 & 2.65 & 0.14 & 0.06 & 4.63 & 60.45 \\
\hline
\end{tabular}
}
\label{tab:aprx}
\end{table}

In Tab.~\ref{tab:aprx}, it can be seen that the initial cubic B-spline approximation has an average PRDN of $10\%$, which was improved by more than $3\%$ although applying only $4$ iterations of the B-spline VP algorithm were applied. The results are similar to those in Tab.~\ref{tab:convtest_avg}, which are based on $100$ VP iterations. Note that our knot-prediction outperforms the uniform and random knot initialization procedures UVP and RVP, as their PRDNs are much higher. The lowest PRDN was achieved by the KR method for all recordings, but the average difference from our approach was less than $1.6\%$. Even though the KR algorithm yielded the best results in terms of PRDN, it was slow compared to our algorithm: The KR algorithm needed about $60$ minutes to predict the optimal knot positions, while the B-spline VP techniques required only $5$ minutes. For a whole $30$ minute recording, our knot-prediction was very fast for $p=2,\infty$, but for $p=1$ it took slightly longer (2 minutes). This is due to the computation of the median in Eq.~\eqref{eq:recall_lpcoeff}, which includes sorting in Alg.~\ref{alg:knotpred}. We conclude that applying our knot-prediction method followed by a few B-spline VP iterations provides good results. Compared to other algorithms, such as KR, our method was able to achieve a similar reconstruction error within reasonable time. Note that the sparse implementations of the matrices in Eq.~\eqref{eq:varpro_grad} have a great impact on execution time. We found that running the nonsparse B-spline VP method on the same test data took more than $10$ minutes. The CR value was the same for all algorithms because we considered B-spline approximations with the same parameters. Note that we computed the compression ratio at an algorithmic level only. Better values could be obtained by considering the CR at an implementational level. For instance, one of the key tasks is to find an adequate bit representation for storing the parameters of the B-spline approximations (i.e.\ quantization of coefficients and knots).


\section{Conclusion}
\label{sec:conclusion}
We have proposed an efficient algorithm for estimating the optimal position of free knots for nonlinear least-squares B-spline fitting. Our approach provides three different strategies for knot-prediction. It is based on the best first-order B-spline approximation in terms of $\ell_p$ norms for $p=1,2,\infty$. Application of these heuristics may depend on the specific task. For instance, $\ell_1$ norm solutions are more suitable for processing noisy data, while the uniform $\ell_\infty$ approximations are preferable for detecting particular structures in signals, such as spikes. The $\ell_2$ norm solution can be interpreted as a good tradeoff that is easy to compute. Another option would be to combine these solutions. For example, the role of the QRS complex can be emphasized by choosing the first few knots using the $\ell_\infty$ norm FOBA algorithm, followed by the $\ell_1$ or $\ell_2$ constraints. 

We also developed a sparse implementation for evaluating B-spline functions and the corresponding partial derivatives with respect to their free knots. A VP algorithm was adapted to refine our initial predictions. We have demonstrated the efficiency of this method using both synthetic and real-world data. We have also shown that the knot-prediction method along with the B-spline VP algorithm can be used for successful compression of real-world ECG recordings. The reconstructed signal has a simple analytic representation that can be used in further processing steps, such as smoothing, feature extraction, classification. 

Since its computational complexity is very low in the case of $\ell_2,\,\ell_\infty$ norms, the FOBA algorithm can also be employed to estimate the optimal number of knots. It is especially useful when no a-priori information about the signal is available. Note that we did not make use of specific properties of ECG data; thus the proposed method is applicable to various types of signals. 

\section{Acknowledgements}
P.\ Kov\'acs was supported by ELTE E\"otv\"os Lor\'and University within the \'UNKP-17-4 New National Excellence Program of the Ministry of Human Capacities. The project was supported by the Hungarian Scientific Research Fund (OTKA), project No K115804. The author would like to thank Cs. J. Heged\H{u}s for his useful remarks and comments. We are grateful to I. Abfalter for proofreading the manuscript.


\bibliographystyle{plain}
\bibliography{refs}

\begin{thebibliography}{10}

\bibitem{beliakov}
G.~Beliakov.
\newblock {Least squares splines with free knots: global optimization
  approach}.
\newblock {\em Applied Mathematics and Computation}, 149:783--798, 2004.

\bibitem{varpro_bezier}
C.~F. Borges and T.~Pastva.
\newblock {Total least squares fitting of B\'ezier and B-spline curves to
  ordered data}.
\newblock {\em Computer Aided Geometric Design}, 19(4):275--289, 2002.

\bibitem{cadzow}
J.~A. Cadzow.
\newblock {Minimum $\ell_1,\,\ell_2$ and $\ell_{\infty}$ norm approximate
  solutions to an overdetermined system of linear equations}.
\newblock {\em Digital Signal Processing}, 12:524--560, 2002.

\bibitem{tikhonov_varpro}
J.~Chung and J.~G. Nagy.
\newblock {An efficient iterative approach for large-scale separable nonlinear
  inverse problems}.
\newblock {\em SIAM Journal on Scientific Computing}, 31(6):4654--4674, 2010.

\bibitem{nonegative_varpro}
A.~Cornelio, E.~L. Piccolomini, and J.~G. Nagy.
\newblock {Constrained numerical optimization methods for blind deconvolution}.
\newblock {\em Numerical Algorithms}, 65(1):23--42, 2014.

\bibitem{curryschoenberg}
H.~B. Curry and I.~J. Schoenberg.
\newblock On {P\'olya} frequency functions {IV}: {T}he fundamental spline
  functions and their limits.
\newblock {\em J. Analyse Math.}, 17:71--107, 1966.

\bibitem{ds1}
P.~de~Chazal, M.~O'Dwyer, and R.~B. Reilly.
\newblock Automatic classification of heartbeats using ecg morphology and
  heartbeat interval features.
\newblock {\em {IEEE} Transactions on Biomedical Engineering},
  51(7):1196--1206, 2004.

\bibitem{dierckx}
P.~Dierckx, editor.
\newblock {\em Curve and Surface Fitting with Splines}.
\newblock Oxford University Press, Oxford, 1993.

\bibitem{PhysioNet}
A.~L. Goldberger, L.~A.~N. Amaral, L.~Glass, J.~M. Hausdorff, P.~Ch. Ivanov,
  R.~G. Mark, J.~E. Mietus, G.~B. Moody, C.~K. Peng, and H.~E. Stanley.
\newblock {PhysioBank, PhysioToolkit, and PhysioNet}: Components of a new
  research resource for complex physiologic signals.
\newblock {\em Circulation}, 101(23):215--220, 2000.

\bibitem{golub-pereyra}
G.~H. Golub and V.~Pereyra.
\newblock The differentiation of pseudo-inverses and nonlinear least squares
  problems whosevariables separate.
\newblock {\em SIAM Journal on Numerical Analysis}, 10(2):413--432, 1973.

\bibitem{golub_pereyra2003}
G.~H. Golub and V.~Pereyra.
\newblock {Separable nonlinear least squares: The variable projection method
  and its applications}.
\newblock {\em Inverse problems}, 19(2):R1--R26, 2003.

\bibitem{guertin}
M.~C. Guertin, editor.
\newblock {\em Sur les splines de régression noeuds variables}.
\newblock Mmoire Maîtrise es Sciences, Université de Montréal, 1992.

\bibitem{gps1971}
I.~Guttman, V.~Pereyra, and H.~D. Scolnik.
\newblock {Least squares estimation for a class of non-linear models}.
\newblock {\em Technometrics}, 15(2):209--218, 1973.

\bibitem{hegedus}
J.~{Cs}. Heged\H{u}s.
\newblock {The method {IRLS} for some best $\ell_p$ norm solutions of under- or
  overdetermined linear systems}.
\newblock {\em Annales Univ. Sci. Budapest., Sect. Comp.}, 45:303--317, 2016.

\bibitem{jupp_gammapol}
D.~L.~B. Jupp.
\newblock {The Lethargy Theorem -- A Property of Approximation by
  $\gamma$--Polynomials}.
\newblock {\em Journal of Approximation Theory}, 14:204--217, 1975.

\bibitem{jupp}
D.~L.~B. Jupp.
\newblock {Approximation to data by splines with free knots}.
\newblock {\em SIAM Journal on Numerical Analysis}, 15(2):328--343, 1978.

\bibitem{ekg}
M.~Karczewicz and M~Gabbouj.
\newblock {ECG} data compression by spline approximation.
\newblock {\em Signal Processing}, 59:43--59, 1997.

\bibitem{kaufman}
L.~Kaufman.
\newblock {A variable projection method for solving seperable nonlinear least
  squares problems}.
\newblock {\em BIT}, 15:49--57, 1975.

\bibitem{lawton-sylvestre}
W.~H. Lawton and E.~A. Sylvestre.
\newblock {Estimation of linear parameters in nonlinear regression}.
\newblock {\em Technometrics}, 13(3):461--467, 1971.

\bibitem{lindstrom}
M.~J. Lindstrom.
\newblock Penalized estimation of free-knot splines.
\newblock {\em Journal of Computational and Graphical Statistics}, 8:333--352,
  1999.

\bibitem{bdoptknot}
N.~Molinari, J.-F. Durand, and R.~Sabatier.
\newblock {Bounded optimal knots for regression splines}.
\newblock {\em Computational Statistics \& Data Analysis}, 45(2):159--178,
  2004.

\bibitem{natanson}
I.~P. Natanson, editor.
\newblock {\em Constructive function theory}, volume I-III.
\newblock Frederick Ungar Publishing, New York, USA, 1964-1965.

\bibitem{varpro_matlab}
D.~P. O'Leary and B.~W. Rust.
\newblock {Variable Projection for Nonlinear Least Squares Problems}.
\newblock {\em Computational Optimization and Applications}, 54(3):579--593,
  2013.

\bibitem{schoenberg}
I.~J. Schoenberg and A.~Whitney.
\newblock On {P\'olya} frequency functions {III}: {T}he positivity of
  translation determinants with an application to the interpolation problem by
  spline curves.
\newblock {\em Transactions of the American Mathematical Society}, 74:246--259,
  1953.

\bibitem{scolnik1970}
H.~D. Scolnik.
\newblock On the solution of linear and non-linear least squares problems.
\newblock unpublished thesis, 1970.

\bibitem{stewart}
G.~W. Stewart, editor.
\newblock {\em Afternotes Goes to Graduate School: Lectures on Advanced
  Numerical Analysis}.
\newblock SIAM, PA, USA, 1998.

\bibitem{lasso}
R.~Tibshirani.
\newblock Regression shrinkage and selection via the lasso.
\newblock {\em Journal of the Royal Statistical Society. Series B.
  (Methodological)}, 58, 1996.

\bibitem{uyarulker}
K.~Uyar and E.~{\"U}lker.
\newblock B-spline curve fitting with invasive weed optimization.
\newblock {\em Applied Mathematical Modelling}, 52, 2017.

\bibitem{watson}
G.~A. Watson, editor.
\newblock {\em Approximation Theory and Numerical Methods}.
\newblock John Wiley \& Sons, New York, USA, 1980.

\bibitem{yoshimoto}
F.~Yoshimoto, T.~Harada, and Y.~Yoshimoto.
\newblock Data fitting with a spline using a real-coded genetic algorithm.
\newblock {\em Computer-Aided Design}, 35, 2003.

\bibitem{yuan}
Y.~Yuan, N.~Chen, and S.~Zhou.
\newblock Adaptive {B}-spline knot selection using multi-resolution basis set.
\newblock {\em IIE (Institute of Industrial Engineers) Transactions}, 45, 2013.

\end{thebibliography}


\end{document}